\documentclass[runningheads]{llncs}

 \newcommand{\ExtendedVersion}[1]{#1}\newcommand{\PaperVersion}[1]{}

\usepackage[T1]{fontenc}
\usepackage{subcaption}
\usepackage{amsmath, amssymb}					% packages to get the fonts, symbols used in most math
\usepackage[english]{babel}
\usepackage{hyperref}
\usepackage{multirow}
\usepackage{latexsym}
\usepackage{listings} 
\usepackage{booktabs}
\usepackage{array}
\usepackage{wrapfig}
\usepackage[inline]{enumitem}
    \setlist[enumerate,2]{ref={(\alph*)}} % cross-references contain only the sub-item number

\usepackage{graphicx}
\usepackage[dvipsnames,svgnames, table]{xcolor} % text color
\DeclareFontFamily{U}{mathx}{\hyphenchar\font45}
\DeclareFontShape{U}{mathx}{m}{n}{
      <5> <6> <7> <8> <9> <10>
      <10.95> <12> <14.4> <17.28> <20.74> <24.88>
      mathx10
      }{}
\DeclareSymbolFont{mathx}{U}{mathx}{m}{n}
\DeclareMathSymbol{\bigtimes}{1}{mathx}{"91}

\usepackage{etoolbox} % For \patchcmd
\usepackage{orcidlink} % to use \orcidlink{...} instead of \orcidID{...}

\usepackage[linesnumbered, vlined, ruled]{algorithm2e}
	\SetAlCapHSkip{0px} % to have the caption of the algorithms start at the left margin
	\SetNlSty{}{}{}     % to avoid bold-font line numbers

\makeatletter
\patchcmd{\algocf@makecaption@ruled}{\hsize}{\textwidth}{}{} % Caption to stretch to full text width
\patchcmd{\@algocf@start}{-1.5em}{0em}{}{} % For comments in algorithms to strethc to the right margin
\makeatother

\makeatletter
\renewcommand{\paragraph}{%
  \@startsection{paragraph}{4}%
  {\z@}{1.5ex \@plus 1ex \@minus .2ex}{-1em}%
  {\normalfont\normalsize\bfseries}%
}
\makeatother

\newcounter{querycnt}

\definecolor{eclipseStrings}{RGB}{42,0.0,255}
\definecolor{eclipseKeywords}{RGB}{127,0,85}
\definecolor{darkgreen}{RGB}{0,128,21}
\colorlet{numb}{magenta!60!black}
\usepackage[normalem]{ulem} % wavy underlines
\makeatletter
\def\uwave{\bgroup \markoverwith{\lower3.5\p@\hbox{\sixly \textcolor{red}{\char58}}}\ULon}
\font\sixly=lasy6 % does not re-load if already loaded, so no memory problem.
\makeatother

\usepackage{xargs}                      % Use more than one optional parameter in a new commands
\usepackage[colorinlistoftodos,prependcaption,textsize=tiny]{todonotes}
\newcommandx{\unsure}[2][1=]{\todo[linecolor=red,backgroundcolor=red!25,bordercolor=red,#1]{#2}}
\newcommandx{\change}[2][1=]{\todo[linecolor=blue,backgroundcolor=blue!25,bordercolor=blue,#1]{#2}}
\newcommandx{\info}[2][1=]{\todo[linecolor=OliveGreen,backgroundcolor=OliveGreen!25,bordercolor=OliveGreen,#1]{#2}}
\newcommandx{\improvement}[2][1=]{\todo[linecolor=Plum,backgroundcolor=Plum!25,bordercolor=Plum,#1]{#2}}
\newcommandx{\thiswillnotshow}[2][1=]{\todo[disable,#1]{#2}}
\newcommandx{\todoi}[2][1=]{\todo[inline,size=\small,linecolor=Plum,backgroundcolor=Periwinkle!25,#1]{#2}}
\newcommandx{\todoCite}[2][1=]{\todo[linecolor=Plum,backgroundcolor=Periwinkle!25,bordercolor=Plum,#1]{#2}}
\newcommandx{\todoDef}[2][1=]{\todo[linecolor=Plum,backgroundcolor=Periwinkle!25,bordercolor=Plum,#1]{Def.: #2}}

\makeatletter
\font\uwavefont=lasyb10 scaled 700
\def\spelling{\bgroup\markoverwith{\lower3.5\p@\hbox{\uwavefont\textcolor{Red}{\char58}}}\ULon}
\def\grammar{\bgroup\markoverwith{\lower3.5\p@\hbox{\uwavefont\textcolor{LimeGreen}{\char58}}}\ULon}
\def\phrasing{\bgroup\markoverwith{\lower3.5\p@\hbox{\uwavefont\textcolor{blue}{\char58}}}\ULon}

\newcommand\remove{\bgroup\markoverwith{\textcolor{red}{\rule[0.5ex]{2pt}{0.4pt}}}\ULon}
\makeatother

\SetKwInput{KwData}{Input}
\SetKwInput{KwResult}{Output}
\SetKw{Continue}{continue}
\SetKw{IfNot}{not}

\SetCommentSty{mycommfont}

\lstdefinelanguage{sparql}{
    basicstyle=\tiny,
    keywordstyle=\color{eclipseKeywords},
    commentstyle=\color{blue}, % style of comment
    stringstyle=\color{darkgreen},
    keywords={FILTER, NOT, EXISTS, bound, OPTIONAL, BIND, IF, isIRI, AS, INSERT, UPDATE, DELETE, WHERE, SELECT, UNION, rr,rml,ql,rdfs},
    morecomment=[n]{?}{\ },
    morecomment=[l]\#,
    morestring=[b]",
    morestring=[s]{<}{>},
    frame=lines,
}
\lstdefinelanguage{rdf}{
    basicstyle=\scriptsize,
    keywordstyle=\color{eclipseKeywords},
    commentstyle=\color{gray}, % style of comment
    stringstyle=\color{darkgreen},
    keywords={rml, ql, rr, rdfs},
    morecomment=[n]{?}{\ },
    morecomment=[l]\#,
    morestring=[b]",
    morestring=[s]{<}{>},
    frame=lines,
}

\lstdefinelanguage{triples}{
    basicstyle=\tiny,
    keywordstyle=\color{eclipseKeywords},
    commentstyle=\color{darkgreen}, % style of comment
    keywords={a, rml, ql, rr, rdfs},
    morecomment=[s]{?}{\ },
    frame=lines,
}

\newcommand{\symAttrUniverse}{\mathcal{A}}

\newcommand{\symAllStrings}{\mathcal{S}} % the symbol for the set of all strings
\newcommand{\symAllIRIs}{\mathcal{I}} % the symbol for the set of all IRIs
\newcommand{\symAllLiterals}{\mathcal{L}} % the symbol for the set of all literals
\newcommand{\symAllBNodes}{\mathcal{B}} % the symbol for the set of all blank nodes
\newcommand{\symAllVariables}{\mathcal{V}}

\newcommand{\symIRI}{u}
\newcommand{\symPattern}{P}
\newcommand{\symTP}{tp}
\newcommand{\symRDFGraph}{G}
\newcommand{\symDataObjUni}{\mathcal{D}}
\newcommand{\symDataAccUni}{\mathcal{Q}}

\newcommand{\symSourcerefUni}{\mathcal{R}}
\newcommand{\symAttrSubSet}{A}
\newcommand{\symProjSet}{P}

\newcommand{\bigDataObject}{D}
\newcommand{\symSetDataset}{\mathcal{D}^\texttt{\tiny ds}\!}
\newcommand{\symSetDatasetX}[1]{\mathcal{D}^\texttt{\tiny ds}_{#1}}
\newcommand{\symSetContentA}{\mathcal{D}^\texttt{\tiny c1}}
\newcommand{\symSetContentAX}[1]{\mathcal{D}^\texttt{\tiny c1}_{#1}}
\newcommand{\symSetContentB}{\mathcal{D}^\texttt{\tiny c2}}
\newcommand{\symSetContentBX}[1]{\mathcal{D}^\texttt{\tiny c2}_{#1}}
\newcommand{\symDataAcc}{L}
\newcommand{\symDataAccX}[1]{L_{#1}}

\newcommand{\symAttrQueryMap}{\mathbb{P}}
\newcommand{\symDataSeq}{\bar{O}}

\newcommand{\symJoinAttrPairs}{\mathbb{J}}

\newcommand{\IbaseIRI}{\symIRI_\mathsf{base}}

\newcommand{\attr}{a}
\newcommand{\subjAttr}{\attr_\textrm{s}}
\newcommand{\predAttr}{\attr_\textrm{p}}
\newcommand{\objAttr}{\attr_\textrm{o}}

\newcommand{\error}{\epsilon}
\newcommand{\mappingExpr}{M}
\newcommand{\mappingTuple}{t}
\newcommand{\symMappingInst}{I}
\newcommand{\mappingRel}{(\symAttrSubSet, \symMappingInst)}

\newcommand{\queryExpr}{q}

\newcommand{\dataObject}{d}
\newcommand{\eval}{\mathit{eval}}
\newcommand{\evalX}[1]{\eval_{#1}}
\newcommand{\cbeval}{\mathit{eval}'\!}
\newcommand{\cbevalX}[1]{\eval_{#1}'}

\newcommand{\sourceReference}{sr}
\newcommand{\sourceType}{type}

\newcommand{\sourceTypeTuple}{(\symSetDataset, \symSetContentA\!, \symSetContentB\!, \symDataAcc, \symDataAcc'\!, \eval, \cbeval, \typeCast)}
\newcommand{\sourceTypeTupleX}[1]{\sourceTypeTupleXXX{#1}{#1}{#1}}
\newcommand{\sourceTypeTupleXXX}[3]{(\symSetDatasetX{#1}, \symSetContentAX{#2}, \symSetContentBX{#3}, \symDataAccX{#2}, \symDataAccX{#3}', \evalX{#2}, \cbevalX{#3}, \typeCastX{#1})}

\newcommand{\typeCast}{cast}
\newcommand{\typeCastX}[1]{\typeCast_{#1}}

\newcommand{\var}{v}

\newcommand{\extExpr}{\varphi}

\newcommand{\extend}{\textsf{\small Extend}}
\newcommand{\extOp}{\extend_{\varphi}^{\attr}}

\newcommand{\equiJoinOp}{\textsf{\small EqJoin}^{\symJoinAttrPairs}}

\newcommand{\toIRI}{\texttt{toIRI}}
\newcommand{\toBNode}{\texttt{toBNode}^{\LTB}}
\newcommand{\toLiteral}{\texttt{toLiteral}}

\newcommand{\concat}{\texttt{concat}}

\newcommand{\lex}{\mathit{lex}}
\newcommand{\dt}{\mathit{dt}}
\newcommand{\literalTuple}{(\lex, \dt)}
\newcommand{\LTB}{\mathit{S2B}}

\newcommand{\projectOp}[1]{\textsf{\small Project}^{#1}}

\newcommand{\union}{\textsf{\small Union}}
\newcommand{\extract}{\textsf{\small Extract}}
\newcommand{\validInput}{\sigma}

\newcommand{\symAllTORBs}{\Phi_\mathsf{torb}}

\newcommand{\fctDom}[1]{\mathrm{dom}(#1)}

\newcommand{\fctAttrs}[1]{\mathrm{attrs}(#1)}

\newcommand{\fctEval}[2]{\eval(#1,#2)}

\newcommand{\fctCbeval}[3]{\cbeval(#1,#2,#3)}

\newcommand{\fctTypeCast}[1]{\typeCast(#1)}

\newcommand{\fctExtOp}[1]{\extOp(#1)}

\newcommand{\fctExtOpX}[3]{\textsf{\small Extend}^{#1}_{#2}(#3)}

\newcommand{\fctEquiJoinOp}[2]{\equiJoinOp(#1, #2)}

\newcommand{\fctToIRI}[2]{\toIRI(#1,#2)} % arguments: term, base IRI
\newcommand{\fctToBNode}[1]{\toBNode(#1)}
\newcommand{\fctToLiteral}[1]{\toLiteral(#1)}

\newcommand{\fctConcat}[2]{\concat(#1, #2)}

\newcommand{\fctProjectOpDflt}[1]{\fctProjectOp{\symProjSet\!}{#1}}
\newcommand{\fctProjectOp}[2]{\projectOp{#1}(#2)}

\newcommand{\fctUnion}[2]{\union(#1,#2)}
\newcommand{\fctUnionBig}[2]{\union\bigl(#1,#2\bigr)}
\newcommand{\fctExtractDflt}{\extract_{\sourceReference}^{(\sourceType, \queryExpr, \symAttrQueryMap)}}
\newcommand{\fctExtractX}[4]{\extract_{#1}^{(#2, #3, #4)}}

\newcommand{\fctValidInput}[1]{\validInput(#1)}
\newcommand{\fctApply}[2]{#1[#2]}
\newcommand{\evalP}[2]{[\![#1]\!]_{#2}}
\newcommand{\fctTrmaps}[1]{\textrm{TrMaps}(#1)}

\newcommand{\OpAND}{\,\textsf{\small AND}\,}
\newcommand{\OpOPT}{\,\textsf{\small OPT}\,}

\newcommand{\ttl}[1]{\texttt{\small #1}}  % for conrete IRIs, etc. used in examples and in formulas
\hypersetup{bookmarksopen=true, citecolor=blue}
\begin{document}												% end of preamble and beginning of text that will be printed

%% \title{\texorpdfstring{On the Satisfiability of\\ RML with SPARQL}{On the Satisfiability of RML with SPARQL}}
%% \titlerunning{On the Satisfiability of RML with SPARQL}
\title{Query-Specific Pruning of RML Mappings}
%\title{Satisfiability-Based Pruning of RML Mappings for SPARQL Querying}

\ExtendedVersion{
\subtitle{
(Extended Version)%
\thanks{This is an extended version of a research paper accepted for the 25th
	%International
	Int.\
Semantic Web Conference (ISWC~2026). The extension consists of
	an appendix of proofs.}
	%a few additional equivalences in Section~\ref{ssec:equivalences:ProjectionPushing} and the complete content of Section~\ref{ssec:equivalences:ExtendPushingOrPulling}, which are not included in the version prepared for the conference proceedings.}
}
}

% If the paper title is too long for the running head, you can set
% an abbreviated paper title here
%
\author{%
 	Sitt Min Oo\inst{1}\orcidlink{0000-0001-9157-7507}
 	\and
 	Olaf Hartig\inst{2}\orcidlink{0000-0002-1741-2090}%
%Anonymous Authors
}
 	\authorrunning{S. Min Oo and O. Hartig}
%	\authorrunning{Anonymous Authors}
% First names are abbreviated in the running head.
% If there are more than two authors, 'et al.' is used.
%
 \institute{
 	Ghent University - imec, Ghent, Belgium \\
 	\email{x.sittminoo@ugent.be}
 	\and
 	Linköping University, Linköping, Sweden \\
 	\email{olaf.hartig@liu.se}
 }

\maketitle              % typeset the header of the contribution

\ExtendedVersion{
	\vspace{-8mm} % Layout Adjustment
}

\begin{abstract}
Current approaches for knowledge graph construction with RML focus on full RDF graph materialization without considering user queries%
	%. These approaches are
	, which is
inefficient in dynamic query environments where often only a specific subset of the full graph is needed to answer a given query. This paper introduces an approach to prune RML mappings such that
	%a given SPARQL query can still be answered completely over the partially-materialized graph.
	the resulting partially-materialized graph is still sufficient to answer a given SPARQL query completely.
By evaluating the approach based on
	a well-know RML materialization
	%the GTFS-Madrid
benchmark, we show that such pruning significantly reduces both the materialization time and the size of the produced graph, while also noticeably reducing querying time.
\end{abstract}

\ExtendedVersion{
	\vspace{-8mm} % Layout Adjustment
	\enlargethispage{\baselineskip}%  % Layout Adjustment
}

% \enlargethispage{\baselineskip}%  % Layout Adjustment

\section{Introduction}
\label{sec:introduction}

State-of-the-art knowledge graph generation approaches support
	%declarative
mapping languages such as R2RML~\cite{Seema2012R2RML},
	RML~\cite{dimou_ldow_2014,IglesiasMolina2023:RMLOntology},
and SPARQL-Generate~\cite{lefrancois_eswc_2017} to provide access to RDF views of other forms of structured data
	%such as relational databases,
	(e.g., relational databases)
as well as semi-structured data
	%such as JSON and XML.
	(e.g., JSON, XML).
These approaches either materialize the RDF view~\cite{Iglesias2022ScalingKnowledgeGraph,ArenasGuerrero2022MorphKGCScalable,Freund2024FlexRML,MinOo2022RMLStreamer,minoo2024rmlweaver}, making it available for querying or further processing directly as RDF graphs, or translate SPARQL queries over the RDF views into a query language supported by the underlying data sources~\cite{ChavesFraga2021Enhancingvirtualontology,calvanese2017ontop,Lopes2011XSPARQL}.
	%can be broadly categorized into two classes: i)~materialization approaches actually generate the RDF representation, making it available for querying or further processing directly as RDF graphs, whereas ii)~virtualization approaches translate SPARQL queries over the RDF views into a query language supported by the underlying data sources.

In the context of use cases that require integrated query access over a federation of multiple data sources, including non-RDF ones, we observe that both of these two types of approaches pose practical limitations: Relying on query translation~(also called virtualization) would limit a federation engine to types of data sources that support a query language. While materialization approaches, in contrast, can access arbitrary types of data sources, including those without a query language, materialization approaches can easily become
	%impractically
inefficient in this setting as they are designed to always produce the full RDF view of the source data, even if the part(s) of the federation query to be answered via this view need only a smaller part of the view to be answered completely.

% Considering these two types of approaches in the context of use cases that require integrated query access over a federation of multiple data sources, including non-RDF data sources, we observe that both of them pose practical limitations:
% Relying on query translation~(also called virtualization) would limit a
% federation engine to types of data sources that support a query language. 
% In contrast, materialization approaches can, in principle, access arbitrary types of data sources, including those without a query language. 
% Recent work by Hartig and Westman~\cite{Hartig26:HeFQUINWebAPIsDemo} leverages
% this flexibility to extend a query federation engine with support for
% JSON-based Web APIs {\changed that update data frequently}, materializing RDF views of the API responses at runtime. 
% Yet, while enabling such an engine to support more diverse types of federation members, materialization approaches can easily become
% 	%impractically
% inefficient in this setting as they are designed to always produce the full RDF view of the source data, even if
% 	the part(s) of the federation query to be answered via this view need only a smaller part of the view to be answered completely.

This work introduces an approach that combines the support for a wider range of data sources from materialization with the query-aware efficiency of virtualization. The approach focuses on mappings described in RML and the core idea is to prune away the parts of an RML mapping that define portions of the resulting RDF view that are irrelevant for answering a given SPARQL query%
	~(which may be a sub-query assigned to a non-RDF data source within a federation query~\cite{Hartig26:HeFQUINWebAPIsDemo}). The
	%; the
pruned mapping can then be used to materialize a smaller RDF graph for which the query returns the same result as for the full RDF view. Section~\ref{sec:demonstration_pruning} illustrates the approach with an example.

% paper outline
Our main technical contribution is an algorithm that captures the pruning approach formally~(Section~\ref{sec:pruning_algorithm}), for which we assume that the mapping to be pruned is given in the algebraic form introduced in our previous work~\cite{minoo2025AlgebraPublished}~(%
	%and
summarized in Section~\ref{sec:Preliminaries}).
As a first
	%step of our work now,
	step,
we define a notion of satisfiability of SPARQL graph patterns over such algebraic mapping expressions and show that this notion of satisfiability is undecidable, which constitutes another technical contribution~(Section~\ref{sec:Satisfiability}).
We then introduce a syntactic property to identify a class of cases in which a triple pattern is guaranteed to be not satisfiable over a specific mapping expression, and we show the correctness of this property~(Section~\ref{sec:Incompatibility}). This
	formal
result provides the foundation of the~pruning~algorithm.

As our second main contribution, we evaluate the effectiveness of our pruning approach based on the GTFS-Madrid benchmark~(Section~\ref{sec:evaluation}); our evaluation
	confirms
	%shows
that pruning can reduce
	%both the materialization time and the size of the resulting RDF graph significantly, while the pruning time is negligible.
the materialization time significantly (down to at most 8\% of the full materialization time in 2/3 of the considered cases), while the pruning time is negligible. Moreover, the resulting RDF graphs are much smaller as well, which can also lead to a noticeable reduction of query times.

\enlargethispage{\baselineskip}%  % Layout Adjustment

\section{Demonstration of the Approach}%
\label{sec:demonstration_pruning}%
\begin{wrapfigure}[11]{r}{0.65\textwidth}%
\vspace{-9mm}
    \begin{lstlisting}[caption=Snippet of an RML mapping to convert airport data into an RDF graph., label=lst:rml_input,frame=lrtb, basicstyle=\scriptsize\ttfamily, numbers=right, captionpos=b]
...
rml:subjectMap [ rml:reference "airport_id" ;
                 rml:termType rml:IRI ];
rml:predicateObjectMap [
 rml:predicate ex:route;
 rml:objectMap [ rml:template 
    "http://example.com/route/{transitRoute}" ] ];
rml:predicateObjectMap [
 rml:predicate gtfs:long;
 rml:objectMap [ rml:reference "longitude";
                 rml:datatType xsd:double ] ].
    \end{lstlisting}%
\end{wrapfigure}%
\noindent
We begin by illustrating
	%the intuition of
our pruning approach informally, for which we assume familiarity with the concepts of
	%RML~\cite{dimou_ldow_2014,IglesiasMolina2023:RMLOntology} and SPARQL.
	RDF, SPARQL, and RML.
Listing~\ref{lst:rml_input} presents a snippet of an example RML document that defines a mapping to generate RDF triples 
about the transit routes (lines~4--7) and the longitude coordinate (lines~8--11) of airports.
Listing~\ref{lst:sparql_input} provides a SPARQL query with two triple patterns to query the resulting RDF data.
We go through each predicate-object map of the RML mapping and decide if it can be pruned by considering each of the triple patterns of the
	%SPARQL
query.

\begin{wrapfigure}[6]{r}{0.67\textwidth}
\vspace{-9mm}
    \begin{lstlisting}[caption=A SPARQL query to retrieve information about airports (prefix declarations omitted)., label=lst:sparql_input,frame=lrtb, basicstyle=\scriptsize\ttfamily, captionpos=b]
SELECT * WHERE {
 ?airportId ex:route <http://transit.api/route/43> .
 ?airportId gtfs:long "23.0"^^xsd:double .  }
    \end{lstlisting}
\end{wrapfigure}
	%To check the first predicate-object map (lines~4--7),
	%To check the first predicate-object map (Listing~\ref{lst:rml_input} lines~4--7),
	For the first pre\-di\-cate-object map (lines 4--7),
we first consider the first triple pattern. % (Listing~\ref{lst:sparql_input}). 
While the subject and the predicate of possible triples produced from the predicate-object map match the triple pattern, 
the object of the triple pattern  cannot be produced by the corresponding object-term map. 
To determine this
	type of
unsatisfiability, we transform the \ttl{rml:template} value of the object-term map,~%
"\ttl{http://example.com/route/\{transitRoute\}}", into the regular expression "\ttl{http:\textbackslash/\textbackslash/example.com\textbackslash/route\textbackslash/.+}".
It is evident that the object IRI of the triple pattern, which begins with the substring \ttl{"http://transit.api"}, does not match the regular expression. 
Thus, the first triple pattern is not satisfiable for the first predicate-object map.
Moving to the second triple pattern, it is also evident that the first predicate-object map cannot produce triples that match the pattern.
The predicate-term is the constant \ttl{ex:route}, which is not the same as the predicate IRI \ttl{gtfs:long} of the triple pattern.
We thus conclude that, since none of the triple patterns is satisfiable with it, the first predicate-object map~can~be~pruned.

For the second predicate-object map (%
  %Listing~\ref{lst:rml_input}
lines~8--11), triples produced from it do not match the first triple pattern because they would have \ttl{gtfs:long} as predicate,
	%instead of
	not
\ttl{ex:route}.
However, the second triple pattern may be satisfiable with that predicate-object map since the predicate IRIs match in this case and, for checking the object terms, we apply the following procedure. 
We first transform the \ttl{rml:reference} value \ttl{"long"} of the object-term map into the regular expression \ttl{".+"}. 
The lexical form of the object literal in the second triple pattern matches this regular expression. 
Moreover, the datatype declared in the object-term map, \ttl{xsd:double}, matches the
	%expected
datatype of the object literal in the triple pattern. 
Thus, at least one triple pattern may be satisfiable with the second predicate-object map, which means this predicate-object map should not be pruned away.
	Hence, at the end, only the second predicate-object map of the snippet of RML in Listing~\ref{lst:rml_input} is kept after pruning. 

The \emph{main use case of this approach} is SPARQL-based query federation with non-RDF data sources that cannot be accessed via a query language. Since
	%(query-rewriting-based)
virtualization is not an option for such sources, their data needs to be converted directly to RDF to be queried via SPARQL. As a typical example of such cases, we refer to recent work by Hartig and Westman~\cite{Hartig26:HeFQUINWebAPIsDemo} which integrates REST/Web APIs within a federation engine that materializes RML-based RDF representations of the API data at query time. Query-time materialization is relevant in this setting because each federation query may require data from different API endpoints; moreover, the API data may change often (e.g., a weather API~\cite{Hartig26:HeFQUINWebAPIsDemo}).

\ExtendedVersion{
	\enlargethispage{\baselineskip}%  % Layout Adjustment
}

% \vspace{-2mm} % Layout Adjustment
\section{Preliminaries}
\label{sec:Preliminaries}
% \vspace{-2mm} % Layout Adjustment

	%To define our pruning approach formally, we adopt
	We formalize our pruning approach in terms of
	the RML-related mapping algebra of our earlier work%
	%the RML-related mapping algebra of Min Oo and Hartig%
~\cite{minoo2025AlgebraPublished}.
	To provide the relevant background for this formalization, this section introduces the concepts of that algebra, and of RDF~and~SPARQL.
	%This section provides the relevant background for this formalization by introducing the concepts of that algebra, and of RDF~and~SPARQL.
	%This section provides the relevant background for this work by introducing the concepts of that algebra, and of RDF~and~SPARQL.
	%This section introduces the relevant concepts of that algebra, and of RDF and SPARQL.

% \vspace{-2mm} % Layout Adjustment
\subsection{Relevant Concepts of RDF and SPARQL}
\label{ssec:Preliminaries:RDFandSPARQL}
% \vspace{-2mm} % Layout Adjustment

Let $\symAllStrings$ be the countably infinite set of all possible strings, and $\symAllIRIs$ 
be the subset of $\symAllStrings$ that consists of all IRIs.
$\symAllLiterals$ is the countably infinite set of all RDF literals where every such literal is a pair $\literalTuple \in \symAllStrings \times \symAllIRIs$ in which $\lex$ is the lexical form and $\dt$ is the datatype IRI of the literal.
Moreover, $\symAllBNodes$ is the countably infinite set of blank nodes (and is disjoint from $\symAllStrings$ and $\symAllLiterals$).
IRIs, literals, and blank nodes are jointly referred to as \emph{RDF terms}.
An \emph{RDF triple} is a tuple $(s,p,o) \in (\symAllIRIs\cup\symAllBNodes) \times \symAllIRIs \times (\symAllIRIs\cup\symAllBNodes\cup\symAllLiterals)$, and an \emph{RDF graph} is a set of RDF triples.

	For our definitions related to SPARQL we adopt the algebraic syntax of P{\'e}rez et al.~\cite{PerezAG09}. Queries
	%Regarding SPARQL, queries
	%SPARQL queries
are formed using \emph{graph patterns}, of which the most basic type is a \emph{triple pattern}, that is, a tuple $(s,p,o) \in (\symAllIRIs\cup\symAllVariables) \times (\symAllIRIs\cup\symAllVariables) \times (\symAllIRIs\cup\symAllLiterals\cup\symAllVariables)$, where $\symAllVariables$ is a countably infinite set of variables (disjoint from $\symAllStrings$, $\symAllLiterals$, and $\symAllBNodes$). Other graph patterns can then be constructed recursively, using operators such as $\OpAND$ and $\OpOPT\!$~\cite{PerezAG09}.
The result of evaluating any such graph pattern~$\symPattern$ over an RDF graph~$\symRDFGraph$ is a set, denoted by $\evalP{\symPattern}{\symRDFGraph}$, that consists of so-called \emph{solution mappings}, which are partial functions of the form $\mu \!: \symAllVariables \rightarrow \symAllIRIs\cup\symAllBNodes\cup\symAllLiterals$. If $\symPattern$ is a triple pattern $\symTP=(s,p,o)$, then $\evalP{\symTP}{\symRDFGraph}$ consists of every solution mapping~$\mu$ for which $\fctDom{\mu} = \{s,p,o\} \cap \symAllVariables$ and $\mu[\symTP] \in \symRDFGraph$, where $\mu[\symTP]$ denotes the triple obtained by replacing the variables in $\symTP$ according to $\mu$. For other forms of graph patterns, we refer to P{\'e}rez et al.'s work for the definition of $\evalP{\symPattern}{\symRDFGraph}$~\cite{PerezAG09}.

% \vspace{-2mm} % Layout Adjustment

\subsection{Data Model of the Mapping Algebra}
\label{ssec:Preliminaries:DataModel}
% \vspace{-2mm} % Layout Adjustment

The mapping algebra
	of our earlier work
	%by Min Oo and Hartig
is defined over so-called \emph{mapping relations}~\cite{minoo2025AlgebraPublished} in which the possible values are RDF terms, plus a special value, $\error$, that
	%is used to capture
	captures
processing errors (and is not an RDF term). For the
	following
formal definition of these relations,
	%and of the tuples therein,
let $\symAttrUniverse$ be a countably infinite set of \emph{attributes}.

\begin{definition}
	\normalfont
	\hspace{-3mm}$^\text{\cite{minoo2025AlgebraPublished}}$
	A \textbf{mapping tuple} is a partial function $\mappingTuple \!: \symAttrUniverse \rightarrow \symAllIRIs\cup\symAllBNodes\cup\symAllLiterals \cup \{\error\}$
\end{definition}

\begin{definition}
	\label{def:MappingRelation}
	\normalfont
	\hspace{-3mm}$^\text{\cite{minoo2025AlgebraPublished}}$
	A \textbf{mapping relation}~$r$ is a tuple $\mappingRel$, where
	$\symAttrSubSet \subset \symAttrUniverse$ is a finite, non-empty set of attributes
	and
	$\symMappingInst$ is a set of mapping tuples such
	that, for every such tuple~$\mappingTuple \in \symMappingInst$, it holds that $\mathrm{dom}(\mappingTuple) = \symAttrSubSet$.
\end{definition}

While the mapping algebra operates over such mapping relations, the final mapping relation
	that results from a relevant sequence of such operations
is meant to capture an RDF dataset.
	At this point we diverge slightly from the original formalism~\cite{minoo2025AlgebraPublished}, which considers the creation of whole RDF datasets~(i.e., including named graphs);
	%While Min Oo and Hartig~\cite{minoo2025AlgebraPublished} consider the creation of whole RDF datasets~(i.e., including named graphs),
in this paper we limit ourselves to single RDF graphs. 
For this purpose, we assume three special attributes, $\subjAttr, \predAttr, \objAttr \in \symAttrUniverse$, and adapt the definition of an RDF representation of mapping relations as follows.

\begin{definition}
	\label{def:ResultingRDFGraph}
	\normalfont
	Let $r = \mappingRel$ be a mapping relation with $\subjAttr, \predAttr, \objAttr \in \symAttrSubSet$.
	The \textbf{RDF graph resulting from $r$} is the RDF graph
	\begin{equation*}
		\symRDFGraph =
		\bigl\{
		\bigl( \mappingTuple(\subjAttr), \mappingTuple(\predAttr), \mappingTuple(\objAttr) \bigr)
		\,\big|\,
		\mappingTuple \in \symMappingInst
		\text{ such that }
		\bigl( \mappingTuple(\subjAttr), \mappingTuple(\predAttr), \mappingTuple(\objAttr) \bigr) \text{ is an RDF triple}
		\bigr\}
		.
	\end{equation*}
\end{definition}

For examples of
	%mapping relations and their RDF representations,
	these concepts,
refer to
	our earlier
	%Min Oo and Hartig's
work~\cite{minoo2025AlgebraPublished}.

\subsection{Syntax of RML-Specific Mapping Expressions}
\label{ssec:Preliminaries:RMLSpecMappingExpressions:Syntax}

% \vspace{-2mm} % Layout Adjustment

%This section introduces the syntax of the RML-spe\-cif\-ic fragment of Min Oo and Hartig's mapping algebra, which is the fragment that we focus on in this paper.  ...

This section introduces the syntax of the mapping expressions that we consider in this paper, which is based on
	our earlier-introduced
	%Min Oo and Hartig's
mapping algebra~\cite{minoo2025AlgebraPublished}.

While this algebra is of a more general nature, not specific to any concrete mapping language such as RML,
	we also introduced a translation of RML into the
	%the authors also introduce a translation of RML into their
algebra~\cite[Section~5]{minoo2025AlgebraPublished}. The fragment of the algebra that this translation uses is the focus of our work in this paper. Therefore, instead of re-introducing the complete algebra here, we introduce only the relevant types of expressions that
	%can be captured in the RML-spe\-cif\-ic fragment that covers
	cover
any possible output of translating RML into the algebra as per the translation algorithm of
	our earlier work%
	%Min Oo and Hartig%
~\cite{minoo2025AlgebraPublished}.
To define these types of mapping expressions we need to introduce a number of related concepts first.

We begin with concepts related to the $\extract$ operator\footnote{In the original work this operator is called \textsf{\small Source}~\cite{minoo2025AlgebraPublished}. We have renamed it for this paper because the name $\extract$ captures more clearly the purpose of this operator.}
of the algebra, which initializes a mapping relation
	that provides a relational view of data that can be
	%based on data
extracted from input data sources. The definition of this operator is based on an abstraction of source data and corresponding query languages: The infinite sets~$\symDataObjUni$ and~$\symDataAccUni$ capture all possible data objects and all possible query languages, respectively. Examples of data objects are: the whole content of a particular JSON file, a single JSON object, and a value of a JSON field. Data objects of the same kind~(e.g., all possible JSON objects) would be captured as a dedicated subset of~$\symDataObjUni$. Each query language $\symDataAcc \in \symDataAccUni$ is considered as a set, where every element $\queryExpr \in \symDataAcc$ is one of the queries written in
	%that language.
	$\symDataAcc$.
Types of data sources, as considered by the $\extract$ operator, are captured as a tuple $\sourceType = \sourceTypeTuple$ where
$\symSetDataset \subseteq \symDataObjUni$ specifies the kind of data objects that can be accessed from data sources of this type;
$\symDataAcc \in \symDataAccUni$ is a language to enumerate components of any data object in $\symSetDataset$, where $\symSetContentA\! \subseteq \symDataObjUni$ is the set of all possible such components;
$\symDataAcc' \in \symDataAccUni$ is a language to select values from the components, with $\symSetContentB\! \subseteq \symDataObjUni$ being the set of all values possible for the type of data source;
$\typeCast$ is a function defining how these values map to RDF literals;
and $\eval$ and $\cbeval$ are functions that define the evaluation semantics of $\symDataAcc$ and $\symDataAcc'\!$, respectively. Further details and examples are in the original paper~\cite{minoo2025AlgebraPublished}.

RML mappings contain references to the data sources
	from which the input data is meant to be obtained
	%to which they are meant to be applied
(e.g., file names).
	As a corresponding abstraction, we assume
	%The corresponding abstraction is
a countably infinite set~$\symSourcerefUni$ of so-called \emph{source references}. As we shall see, every $\extract$ operator
	%in a mapping expression
is parameterized with such a source~reference.

Another relevant operator is \textsf{\small Extend}, which is parameterized with an attribute~%
	%$\attr \in \symAttrUniverse$
	$\attr$
and a so-called \emph{extend expression}~\cite{minoo2025AlgebraPublished} that can be evaluated with a mapping tuple as input and that produces an RDF term or the error symbol~$\error$ as output. For every input tuple,
	%the \textsf{\small Extend} operator
	\textsf{\small Extend}
uses this output to extend the tuple with a value for attribute~$\attr$.
	%A specific type of extend expressions that we introduce specifically for the considered fragment of the mapping algebra is defined~as~follows.
	%The following definition captures the specific type of extend expressions used by the fragment of the mapping algebra considered in this paper.
	The specific type of extend expressions used by the fragment of the mapping algebra considered in this paper is defined~as~follows.

\begin{definition}
	\label{def:TORBExtendExpression}
	\normalfont
	A \textbf{template-or-reference-based extend expression (torb-extend expression)} is an extend expression%
		%~\cite{minoo2025AlgebraPublished}%
	~$\extExpr$ of any of the following forms:
	\begin{enumerate}
		\item $\extExpr$ is an RDF literal~$(lex, dt)$ with $dt=\ttl{xsd:string}$.

		\item $\extExpr$ is an attribute in $\symAttrUniverse$.

		\item $\extExpr$ is of the form $\fctConcat{\extExpr_1, \ldots}{\extExpr_n}$, where $n \geq 2$ and every $\extExpr_i$ ($1 \leq i \leq n$) is either a literal~$(lex, dt)$ with $dt=\ttl{xsd:string}$ or an attribute in $\symAttrUniverse$.
	\end{enumerate}
		Hereafter, we
		%We
	write $\symAllTORBs$ to denote the set of all possible torb-extend expressions and, for every such expression~$\extExpr \in \symAllTORBs$, $\fctAttrs{\extExpr}$ is the set of all attributes in $\extExpr$.
\end{definition}

% \begin{example}
% 	\label{ex:TORBExtendExpression}
% 	\inlinetodo{small example that uses an extend expression of an earlier example}
% \end{example}

The forms of
	%RML-spe\-cif\-ic mapping
	algebra
expressions that
	the RML-to-algebra translation algorithm~\cite{minoo2025AlgebraPublished} produces and, thus, that
we consider in this paper
	%are all built on
	all share
a common form of sub-expressions, which result from translating individual RML triples maps into the
	%mapping
algebra. As the last ingredient needed for defining the
	considered fragment of the algebra,
	%RML-spe\-cif\-ic algebra expressions,
	%RML-spe\-cif\-ic mapping expressions,
%
	%we introduce this form of sub-expressions in the following definition.
	we introduce this form of sub-expressions:

\begin{definition}
	\label{def:TrMapExpression}
	\normalfont
	Let $\IbaseIRI$ be an IRI (considered as \emph{base IRI}) and $\LTB \!: \symAllStrings \rightarrow \symAllBNodes$ be an injective function that maps every string to a unique blank node.
	A \textbf{Triples-Map-spe\-cif\-ic expression (TrMap-ex\-pres\-sion)} with $\IbaseIRI$ and $\LTB$ is either%
% 	\begin{equation*}
% % 		\fctExtOpX{\graphAttr}{\extExpr_\mathrm{g}}{
% 			\fctExtOpX{\objAttr}{\extExpr_\mathrm{o}}{ \mappingExpr }
% % 		}
% 		%
% 		\quad\text{or}\quad
% 		%
% % 		\fctExtOpX{\graphAttr}{\extExpr_\mathrm{g}}{
% 			\fctExtOpX{\objAttr}{\extExpr_\mathrm{o}'}{
% 				\fctEquiJoinOp{ \mappingExpr }{
% 					\fctExtractX{\sourceReference'}{\sourceType'\!}{\queryExpr'\!}{\symAttrQueryMap'}
% 				}
% 			}
% % 		}
% 	\end{equation*}
	\begin{align*}
		&
		\fctExtOpX{\objAttr}{\extExpr_\mathrm{o}}{
			\;
			\hspace{12mm}
			\fctExtOpX{\predAttr}{\extExpr_\mathrm{p}}{
				\fctExtOpX{\subjAttr}{\extExpr_\mathrm{s}}{
					\fctExtractDflt
				}
			}
			\;
		}
		\quad \text{ or}
		\\
		&\fctExtOpX{\objAttr}{\extExpr'_\mathrm{o}}{
			\;
			\fctEquiJoinOp{
				\fctExtOpX{\predAttr}{\extExpr_\mathrm{p}}{
					\fctExtOpX{\subjAttr}{\extExpr_\mathrm{s}}{
						\fctExtractDflt
					}
				}
			}{
				\fctExtractX{\sourceReference'}{\sourceType'\!}{\queryExpr'\!}{\symAttrQueryMap'}
			}
			\;
		}, 
	\end{align*}
	where:
	\begin{enumerate}
% 		\item $\mappingExpr$ is a sub-expression of the form
% 		$\fctExtOpX{\predAttr}{\extExpr_\mathrm{p}}{
% 			\fctExtOpX{\subjAttr}{\extExpr_\mathrm{s}}{
% 				\fctExtractDflt
% 			}
% 		}$;
% 
% 		\item $\extExpr_\mathrm{g}$ is an extend expression that is either
% 		\begin{enumerate}
% 			\item the IRI \ttl{rml:defaultGraph} or
% 			\item of the form $\fctToIRI{\extExpr'}{\IbaseIRI}$ with $\extExpr'\! \in \AllUntypedExprs{\fctDom{\symAttrQueryMap}}$;
% 		\end{enumerate}
%
		\item $\extExpr_\mathrm{o}$ is an extend expression~\cite{minoo2025AlgebraPublished} of any of the following specific forms:
		\begin{enumerate}
			\item[(a)]
			a literal,
			\quad
			(b) an IRI,
			\quad
			(c) a blank node,

			\item[(d)]
			$\toLiteral(\extExpr'\!, dt)$ with $\extExpr'\! \in \symAllTORBs$, $\fctAttrs{\extExpr} \subseteq \fctDom{\symAttrQueryMap}$,
				and
			$dt \in \symAllIRIs$,

			\item[(e)]
			$\fctToIRI{\extExpr'}{\IbaseIRI}$ with $\extExpr'\! \in \symAllTORBs$ and $\fctAttrs{\extExpr} \subseteq \fctDom{\symAttrQueryMap}$, or

			\item[(f)]
			$\fctToBNode{\extExpr'}$ with $\extExpr'\! \in \symAllTORBs$ and $\fctAttrs{\extExpr} \subseteq \fctDom{\symAttrQueryMap}$;
		\end{enumerate}

		%%% Note that the restrictions captured by the following two bullet points
		%%% assume that the RML mapping that has been translated into a mapping
		%%% expression is valid with respect to the RML-Core spec, which requires
		%%% that, e.g., a predicate map "is a rule that MUST generate an IRI"
		%%%                     -Olaf
		\item $\extExpr_\mathrm{p}$ is an extend expression that may be only of the form~(b) or~(e);

		\item $\extExpr_\mathrm{s}$ and $\extExpr'_\mathrm{o}$ are extend expressions
			%that may be only
		of the form~(b), (c), (e), or (f), respectively;

		\item $\sourceReference$ and $\sourceReference'\!$ are source references (potentially the same);

		\item
		$\sourceType = \sourceTypeTupleX{1}$ is a source type,
		$\queryExpr \in \symDataAccX{1}$, and
		$\symAttrQueryMap \!: \symAttrUniverse \rightarrow \symDataAccX{1}'$ is a partial function such that $\fctDom{\symAttrQueryMap} \cap \{\subjAttr, \predAttr, \objAttr%
			%, \graphAttr
		\} = \emptyset$;

		\item $\sourceType' = \sourceTypeTupleX{2}$ is a source type, 
		$\queryExpr'\! \in \symDataAccX{2}$, and
		$\symAttrQueryMap' \!: \symAttrUniverse \rightarrow \symDataAccX{2}'$ is a partial function such that $\fctDom{\symAttrQueryMap'} \cap \{\subjAttr, \predAttr, \objAttr%
			%, \graphAttr
		\} = \emptyset$;

		\item $\fctDom{\symAttrQueryMap} \cap \fctDom{\symAttrQueryMap'} = \emptyset$;

		\item $\symJoinAttrPairs \subseteq \fctDom{\symAttrQueryMap} \times \fctDom{\symAttrQueryMap'}$.
	\end{enumerate}
\end{definition}

\begin{example}
	\label{ex:TrMapExpression}
	The triples map that consists of the first pre\-di\-cate-object map of the RML mapping in Listing~\ref{lst:rml_input} is captured by the following TrMap-ex\-pres\-sion:
	\begin{equation*}
		\fctExtOpX{\objAttr}{ \fctToIRI{ \fctConcat{\ell}{\attr_1} }{\IbaseIRI} }{
			\fctExtOpX{\predAttr}{ \symIRI_1 }{
				\fctExtOpX{\subjAttr}{ \fctToIRI{\attr_2}{\IbaseIRI} }{
					\fctExtractX{\sourceReference}{\sourceType}{\queryExpr}{\symAttrQueryMap_1}
				}
			}
		},
	\end{equation*}
	where
		$\attr_1$ and $\attr_2$ are attributes different from $\subjAttr$, $\predAttr$, and $\objAttr$, respectively,
		%$\attr_1, \attr_2 \in \symAttrUniverse$,
	$\ell$ is the literal $(\ttl{"http://example.com/route/"}, \ttl{xsd:string})$,
	$\symIRI_1$ is the IRI \ttl{ex:route},
	$\symAttrQueryMap_1 = \{ \attr_1 \mapsto \ttl{"transitRoute"}, \attr_2 \mapsto \ttl{"airport\_id"} \}$,
	and $\sourceReference$ is an arbitrary source reference.
	The other two arguments of the $\extract$ operator, $\sourceType$ and $\queryExpr$, are not specified further
		in this example
	as they depend on the \ttl{rml:logicalSource}
		of the triples map, which is
		%and its \ttl{rml:iterator}, which are
	not
		%given
	in Listing~\ref{lst:rml_input} (and is irrelevant for our pruning~approach).
% 
% 	The TrMap-ex\-pres\-sion for the triples map with the second pre\-di\-cate-object map in Listing~\ref{lst:rml_input} is:
% 	\begin{equation*}
% 		\fctExtOpX{\objAttr}{ \toLiteral(\attr_1, \ttl{xsd:double}) }{
% 			\fctExtOpX{\predAttr}{ \symIRI_2 }{
% 				\fctExtOpX{\subjAttr}{ \fctToIRI{\attr_2}{\IbaseIRI} }{
% 					\fctExtractX{\sourceReference}{\sourceType}{\queryExpr}{\symAttrQueryMap_2}
% 				}
% 			}
% 		},
% 	\end{equation*}
% 	where $\attr_1, \attr_2 \in \symAttrUniverse \setminus \{ \subjAttr, \predAttr, \objAttr \}$,
% 	$\symIRI_2$ is the IRI \ttl{gtfs:long},
% 	$\symAttrQueryMap_2 = \{ \attr_1 \mapsto \ttl{"longitude"},$ $\attr_2 \mapsto \ttl{"airport\_id"} \}$,
% 	and $\sourceReference$, $\sourceType$, and $\queryExpr$ are the exact same as above.
\end{example}

We
	%are now ready to
	can now
define the notion of an RML-spe\-cif\-ic mapping expression. Informally, such an expression is built by wrapping a TrMap-ex\-pres\-sion into a \textsf{\small Project} operator that keeps only the attributes $\subjAttr$, $\predAttr$, and $\objAttr$, and by combining multiple such \textsf{\small Project}-wrapped TrMap-ex\-pres\-sions using \textsf{\small Union} operators.~Formally:

\begin{definition}
	\label{def:RMLSpecificExpression}
	\normalfont
	Let  $\IbaseIRI$ be an IRI
		%(considered as \emph{base IRI})
	and $\LTB \!: \symAllStrings \rightarrow \symAllBNodes$ be an injective function that maps every string to a unique blank node.
	An \textbf{RML-spe\-cif\-ic mapping expression} with $\IbaseIRI$ and $\LTB$ is defined recursively as follows:
	\begin{enumerate}
		\item For every TrMap-ex\-pres\-sion $\mappingExpr_\mathsf{TM}$ with $\IbaseIRI$ and $\LTB$, and the (fixed) set $\symProjSet = \{\subjAttr, \predAttr, \objAttr%
			%, \graphAttr%
		\}$, $\fctProjectOpDflt{\mappingExpr_\mathsf{TM}}$ is an RML-spe\-cif\-ic mapping expression.

		\item For
			%every
		two RML-spe\-cif\-ic mapping expressions $\mappingExpr'\!$ and $\mappingExpr''\!$ such that $\mappingExpr''\!$ is of the form $\fctProjectOpDflt{\mappingExpr_\mathsf{TM}}$, $\fctUnion{\mappingExpr'\!}{\mappingExpr''}$ is an RML-spe\-cif\-ic mapping expression.
	\end{enumerate}
	For every RML-spe\-cif\-ic mapping expression $\mappingExpr$, we write $\fctTrmaps{\mappingExpr}$ to denote the set of all TrMap-ex\-pres\-sions contained in $\mappingExpr$.
\end{definition}

% \begin{example}
% 	\label{ex:RMLSpecificExpression}
% 	\inlinetodo{...}
% \end{example}

While the notions of an RML-spe\-cif\-ic mapping expression and of a TrMap-ex\-pres\-sion are defined with respect to an IRI~$\IbaseIRI$ and a function~$\LTB$, hereafter, we mention $\IbaseIRI$ and $\LTB$ only in cases in which they are explicitly relevant.

\subsection{Semantics of RML-Specific Mapping Expressions}
\label{ssec:Preliminaries:RMLSpecMappingExpressions:Semantics}

This section introduces a formal semantics of RML-spe\-cif\-ic mapping expressions.
	As a basis for evaluating
	%To evaluate
such an expression, it is necessary to assign concrete data objects to the source references mentioned in the $\extract$ operators of the expression, for which we introduce the notion of a source assignment.

\begin{definition}
	\label{def:SourceAssignment}
	\normalfont
	A \textbf{source assignment} is a partial function
	$\validInput \!: \symSourcerefUni \rightarrow \symDataObjUni$.
\end{definition}

Notice that such a source assignment may not be applicable to a given mapping expression. For instance, it may not cover all of the source references that occur within the expression or it may assign data objects that are not of the expected types. The following definition formalizes the conditions for a source assignment to be applicable,
	%both for a TrMap-ex\-pres\-sion and for an RML-spe\-cif\-ic mapping expression.
	for the types of expressions considered in this paper.

\begin{definition}
	\label{def:ValidInput}
	\normalfont
	A source assignment~$\validInput$ is a \textbf{valid input} for a TrMap-ex\-pres\-sion~$\mappingExpr_\mathsf{TM}$ if the following conditions hold:
	\begin{enumerate}
		\item
		If $\mappingExpr_\mathsf{TM}$ is of the form
		$
		\fctExtOpX{\objAttr}{\extExpr_\mathrm{o}}{
			\fctExtOpX{\predAttr}{\extExpr_\mathrm{p}}{
				\fctExtOpX{\subjAttr}{\extExpr_\mathrm{s}}{
					\fctExtractDflt
				}
			}
		}
		$
		with $\sourceType = \sourceTypeTuple$,
		then it must hold that
			%$\sourceReference \in \fctDom{\validInput}$ and
		$\fctValidInput{\sourceReference} \in \symSetDataset$.

		\item
		If $\mappingExpr_\mathsf{TM}$
			%is of the second form in Definition~\ref{def:TrMapExpression}; i.e., $\mappingExpr_\mathsf{TM}$
		is of the form
		\\ \hspace*{3mm}
		$\fctExtOpX{\objAttr}{\extExpr_\mathrm{o}'}{
			\fctEquiJoinOp{
				\fctExtOpX{\predAttr}{\extExpr_\mathrm{p}}{
					\fctExtOpX{\subjAttr}{\extExpr_\mathrm{s}}{
						\fctExtractDflt
					}
				}
			}{
				\fctExtractX{\sourceReference'}{\sourceType'\!}{\queryExpr'\!}{\symAttrQueryMap'}
			}
		}$
		\\ \hspace*{10mm} with $\sourceType = \sourceTypeTupleX{1}$ and
		\\ \hspace*{11mm} and $\sourceType'\! = \sourceTypeTupleX{2}$,
		\\ then it must hold that
			%$\sourceReference, \sourceReference'\! \in \fctDom{\validInput}$, $\fctValidInput{\sourceReference} \in \symSetDatasetX{1}$,
			$\fctValidInput{\sourceReference} \in \symSetDatasetX{1}$
		and $\fctValidInput{\sourceReference'} \in \symSetDatasetX{2}$.
	\end{enumerate}
	A source assignment~$\validInput$ is a \textbf{valid input} for an RML-spe\-cif\-ic mapping expression~$\mappingExpr$ if $\validInput$ is a valid input for every TrMap-ex\-pres\-sion in $\fctTrmaps{\mappingExpr}$.
\end{definition}

% \begin{example}
% 	\label{ex:ValidInput}
% 	\inlinetodo{...}
% \end{example}

Given the notion of valid inputs for evaluating RML-spe\-cif\-ic mapping expressions, we can now define the semantics of such an evaluation.

\begin{definition}
	\label{def:Semantics}
	\normalfont
	Let $\mappingExpr_\mathsf{rml}$ be an RML-spe\-cif\-ic mapping expression and
	$\validInput$ be a source assignment that is a valid input for $\mappingExpr_\mathsf{rml}$.
	For every sub-expression~$\mappingExpr$ of $\mappingExpr_\mathsf{rml}$, including $\mappingExpr_\mathsf{rml}$ itself, the \textbf{evaluation of $\mappingExpr$ based on $\validInput$}, denoted by $\fctApply{\mappingExpr}{\validInput}$, is the mapping relation $\mappingRel$ that is defined recursively as follows:
	\begin{enumerate}
		\item \label{def:Semantics:Extract}
		If $\mappingExpr$ is
			%of the form
		$\fctExtractDflt$ with $\sourceType = \sourceTypeTuple$, then
		      \begin{align*}
			      &\symAttrSubSet = \fctDom{\symAttrQueryMap} \quad \text{ and}
			      \\
			      &\symMappingInst = \big\{ \{\attr_1 \rightarrow \fctTypeCast{\var_1}, \dots, \attr_n \rightarrow \fctTypeCast{\var_n} \} \ \big| \
			       \dataObject \text{ is in } \symDataSeq \text{ and}             % Note that writing "is in" rather than $\in$ is deliberate because \symDataSeq is not a set but a sequence, for which \in is undefined. -Olaf
			      \\[-1mm]
			      & \hspace{67mm} ((\attr_1 , \var_1), \dots, (\attr_n , \var_n)) \in X_d \big\}
			      ,
			      \\[-6mm] % Layout Adjustment
		      \end{align*}
		      where $\symDataSeq = \fctEval{\bigDataObject}{\queryExpr}$ with $\bigDataObject = \fctValidInput{\sourceReference}$, and%
		      \vspace{-2mm} % Layout Adjustment
		      \begin{equation*}
			      X_d = \bigtimes_{\attr \in \fctDom{\symAttrQueryMap}}
			      \big\{ (\attr, \var) \ \big| \
			      \var \text{ is in } \fctCbeval{\bigDataObject}{\dataObject}{\queryExpr'} % Also here, writing "is in" instead of $\in$ is deliberate. -Olaf
			      \text{ with }  \queryExpr'\! = \symAttrQueryMap(\attr) \big\}
			      .
		      \end{equation*}

		\item \label{def:Semantics:Extend}
		If $\mappingExpr$ is
			of the form
		$\fctExtOp{\mappingExpr'}$,
		and given $\fctApply{\mappingExpr'}{\validInput} = (\symAttrSubSet'\!, \symMappingInst')$, then
		$$
		\symAttrSubSet = \symAttrSubSet' \cup \{\attr\}
		\quad \text{ and } \quad
		\symMappingInst = \bigl\{\mappingTuple \cup \{\attr \rightarrow \fctEval{\extExpr}{\mappingTuple}\} \ | \ \mappingTuple \in \symMappingInst' \bigr\}
		,
		$$
		where $\fctEval{\extExpr}{\mappingTuple}$ is the evaluation of
			%the extend expression~$\extExpr$ over mapping tuple~%
			$\extExpr$ over
		$\mappingTuple$, as per~\cite[Definition~9]{minoo2025AlgebraPublished}.

		\item \label{def:Semantics:Project}
		If $\mappingExpr$ is of the form $\fctProjectOpDflt{\mappingExpr'}$,
		and given $\fctApply{\mappingExpr'}{\validInput} = (\symAttrSubSet'\!, \symMappingInst')$, then
		$$
		\symAttrSubSet = \symAttrSubSet' \cap \symProjSet
		\quad \text{ and } \quad
		\symMappingInst = \{ \mappingTuple[\symAttrSubSet] \ | \ \mappingTuple \in \symMappingInst' \}
		,
		$$
		where $\mappingTuple[\symAttrSubSet]$
			%denotes the mapping tuple that is the restriction of $\mappingTuple$ to $\symAttrSubSet$.
			is the mapping tuple~$\mappingTuple'$ s.t.\ $\fctDom{\mappingTuple'}=\symAttrSubSet$, $\mappingTuple(\attr) = \mappingTuple'\!(\attr)$ for all $\attr \in \symAttrSubSet$.

		\item \label{def:Semantics:Join}
		If $\mappingExpr$ is
			%of the form
		$\fctEquiJoinOp{\mappingExpr'\!}{\mappingExpr''}$,
		and given $(\symAttrSubSet'\!, \symMappingInst') = \fctApply{\mappingExpr'}{\validInput}$
		and $(\symAttrSubSet''\!, \symMappingInst'') = \fctApply{\mappingExpr''}{\validInput}$,
		%then
		\begin{align*}
			 & \symAttrSubSet = \symAttrSubSet' \cup \symAttrSubSet''
			 \\
			 &\symMappingInst =  \{ \mappingTuple_1 \cup \mappingTuple_2 \ | \
			      \mappingTuple_1 \in \symMappingInst' \text { and }
			      \mappingTuple_2 \in \symMappingInst'' \text{ such that }
			      \mappingTuple_1(\attr_1) = \mappingTuple_2(\attr_2) \text{ for all }
			      (\attr_1,\attr_2) \in \symJoinAttrPairs
			      \}
			.
		\end{align*}

		\item \label{def:Semantics:Union}
		If $\mappingExpr$ is
			%of the form
		$\fctUnion{\mappingExpr'\!}{\mappingExpr''}$,
		and given $(\symAttrSubSet'\!, \symMappingInst') = \fctApply{\mappingExpr'}{\validInput}$
		and $(\symAttrSubSet''\!, \symMappingInst'') = \fctApply{\mappingExpr''}{\validInput}$,
			%then
		$$
		\symAttrSubSet = \symAttrSubSet' \cup \symAttrSubSet''
		\quad \text{ and } \quad
		\symMappingInst = \symMappingInst' \cup \symMappingInst''
		.
		$$
	\end{enumerate}
\end{definition}

\section{Satisfiability}
\label{sec:Satisfiability}
% \vspace{-2mm} % Layout Adjustment

Our pruning approach is based on a notion of satisfiability of triple patterns with respect to RDF data obtained via mappings. This section provides the relevant formal results, for which we begin by defining this notion of satisfiability.

%%% The following (commented) definition is more general than the one we need
%%% for the paper (which is given below). We may need the more general one in
%%% the future; e.g., for a journal version of the paper.   -Olaf 
% \begin{definition}
% 	\label{def:Satisfiability}
% 	\normalfont
% 	Let $\symPattern$ be a graph pattern and $\mappingExpr$ be a valid mapping expression for which it holds that $\{ \subjAttr, \predAttr, \objAttr\} \subseteq \fctSchema{\mappingExpr}$.
% 	$\symPattern$ is \textbf{satisfiable over} $\mappingExpr$ if there exists a source assignment $\validInput$ that is valid input for $\mappingExpr$ such that $\evalP{\symPattern}{\symRDFGraph} \neq \emptyset$ with $\symRDFGraph$ being the RDF graph resulting from the mapping relation $\fctApply{\mappingExpr}{\validInput}$.
% \end{definition}

\begin{definition}
	\label{def:Satisfiability}
	\normalfont
	Let $\mappingExpr$ be a TrMap-ex\-pres\-sion. % such that $\{ \subjAttr, \predAttr, \objAttr\} \subseteq \fctSchema{\mappingExpr}$.
	A triple pattern~$\symTP$ is \textbf{satisfiable over} $\mappingExpr$ if there exists a source assignment $\validInput$ that is valid input for $\mappingExpr$ such that $\evalP{\symTP}{\symRDFGraph} \neq \emptyset$
		%with $\symRDFGraph$ being
		where $\symRDFGraph$ is
	the RDF graph resulting from the mapping~relation~$\fctApply{\mappingExpr}{\validInput}$.
\end{definition}

% \begin{example}
% 	\label{ex:Satisfiability}
% 	\inlinetodo{small example that uses a TrMap-ex\-pres\-sion of an earlier example}
% \end{example}

The following result\footnote{The proofs of all formal results in this paper are provided in \ExtendedVersion{the Appendix}\PaperVersion{an extended version~\cite{ExtendedVersion}}.}
	%a separate PDF file that we have submitted as supplementary material, in addition to the paper itself.}
	%a separate PDF file that we have submitted as supplementary material and will put on arXiv after acceptance.}
	%, submitted as supplementary material. We will put it on arXiv after the review phase.}
is the first building block of our pruning approach. It shows that,
	%in the context of
	when
evaluating a SPARQL graph pattern over an RDF graph created by applying an RML-spe\-cif\-ic mapping expression, the correct query result may be produced without explicitly considering every TrMap-ex\-pres\-sion of the given
	%(RML-spe\-cif\-ic)
mapping expression. In particular, it is possible to ignore
	every TrMap-ex\-pres\-sion over which none of the triple patterns of the graph pattern is~%
	%TrMap-ex\-pres\-sions over which none of the triple patterns of the graph pattern are~%
satisfiable.

\begin{proposition}
	\label{prop:CorrectnessOfPruning}
	\normalfont
	Let $\symPattern$ be a graph pattern and let $\mappingExpr$ and $\mappingExpr'$ be RML-spe\-cif\-ic mapping expressions such that $\fctTrmaps{\mappingExpr'} \subseteq \fctTrmaps{\mappingExpr}$ and, for every TrMap-ex\-pres\-sion $\mappingExpr_\mathsf{TM} \in \bigl( \fctTrmaps{\mappingExpr} \setminus \fctTrmaps{\mappingExpr'} \bigr)$,
		%every triple pattern $\symTP$ in $\symPattern$ is \emph{not}
		\emph{none} of the triple patterns in $\symPattern$ is
	satisfiable over~$\mappingExpr_\mathsf{TM}$.
	Then, for every source assignment~$\validInput$ that is valid input for $\mappingExpr$%
		%(and, thus, also for $\mappingExpr'$)%
	, it holds that $\evalP{\symPattern}{\symRDFGraph} = \evalP{\symPattern}{\symRDFGraph'}$, where $\symRDFGraph$ and $\symRDFGraph'$ are the RDF graphs resulting from the mapping relations $\fctApply{\mappingExpr}{\validInput}$ and $\fctApply{\mappingExpr'}{\validInput}$, respectively.
\end{proposition}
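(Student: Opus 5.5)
Our plan is to reduce the claim to a statement about the union structure of RML-specific mapping expressions, and then use a standard locality argument for SPARQL graph patterns. The locality argument says that the result of evaluating a graph pattern depends only on the triples that match some triple pattern occurring in it.

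\textbf{Step 1 (decomposition of the resulting graph).} By induction on Definition~\ref{def:RMLSpecificExpression}, using the semantics of $\union$ and $\projectOp{\symProjSet}$ in Definition~\ref{def:Semantics}, we show the following for any valid input $\validInput$ for an RML-specific expression $\mappingExpr$. The RDF graph resulting from $\fctApply{\mappingExpr}{\validInput}$ equals $\bigcup_{\mappingExpr_\mathsf{TM} \in \fctTrmaps{\mappingExpr}} \symRDFGraph_{\mathsf{TM}}$, where $\symRDFGraph_\mathsf{TM}$ is the RDF graph resulting from $\fctApply{\mappingExpr_\mathsf{TM}}{\validInput}$. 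This holds because projection to $\{\subjAttr,\predAttr,\objAttr\}$ does not change the triples read off in Definition~\ref{def:ResultingRDFGraph}, and union of instances yields union of graphs. Since $\fctTrmaps{\mappingExpr'} \subseteq \fctTrmaps{\mappingExpr}$, $\validInput$ is also valid for $\mappingExpr'$ (Definition~\ref{def:ValidInput} is per TrMap-expression). Hence $\symRDFGraph' \subseteq \symRDFGraph$, and $\symRDFGraph \setminus \symRDFGraph' \subseteq \bigcup_{\mappingExpr_\mathsf{TM} \in \fctTrmaps{\mappingExpr}\setminus\fctTrmaps{\mappingExpr'}} \symRDFGraph_\mathsf{TM}$.

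\textbf{Step 2 (removed triples match no triple pattern).} Let $t \in \symRDFGraph \setminus \symRDFGraph'$. Then $t \in \symRDFGraph_\mathsf{TM}$ for some pruned $\mappingExpr_\mathsf{TM}$. Suppose some triple pattern $\symTP$ in $\symPattern$ had a $\mu$ with $\mu[\symTP]=t$. Then $\mu \in \evalP{\symTP}{\symRDFGraph_\mathsf{TM}}$ with witness $\validInput$, which is a valid input for $\mathsf{TM}$. So $\symTP$ would be satisfiable over $\mappingExpr_\mathsf{TM}$ by Definition~\ref{def:Satisfiability}, a contradiction. Thus no triple of $\symRDFGraph\setminus\symRDFGraph'$ matches any triple pattern of $\symPattern$.

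\textbf{Step 3 (locality lemma).} We prove the following by structural induction on $\symPattern$ over the operators of P\'erez et al.\ ($\OpAND$, $\OpOPT$, UNION, FILTER). For graphs $\symRDFGraph'\subseteq\symRDFGraph$ such that no triple in $\symRDFGraph\setminus\symRDFGraph'$ matches any triple pattern in $\symPattern$, we have $\evalP{\symPattern}{\symRDFGraph}=\evalP{\symPattern}{\symRDFGraph'}$.
\begin{itemize}
\item \emph{Base case.} For a triple pattern, $\evalP{\symTP}{\symRDFGraph}$ is determined by the triples matching $\symTP$, and these coincide in both graphs.
\item \emph{Inductive cases.} The hypothesis restricts to subpatterns, since their triple patterns are among those of $\symPattern$. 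Every operator's result is defined purely from the results of its subpatterns (joins, unions, differences, left outer joins, filters). No operator accesses the graph directly, so equality propagates.
\end{itemize}
Combining Steps 1--3 yields the proposition.

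We expect the main obstacle to be Step 3. The difficulty is not depth but bookkeeping: the fragment of graph patterns must be fixed precisely, since "graph pattern" refers to P\'erez et al. Had constructs like GRAPH or property paths been present, locality would need more care. Because the paper adopts P\'erez et al.'s algebra, we will treat exactly its operators and state the lemma as above. A minor care point in Step 1 is tuples whose $(\subjAttr,\predAttr,\objAttr)$ values contain $\error$ or are not valid RDF triples. These are filtered identically on both sides, so they cause no problem.
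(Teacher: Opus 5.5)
Your proof is correct and takes essentially the same route as the paper. The paper likewise shows $\symRDFGraph' \subseteq \symRDFGraph$, handles the triple-pattern base case by finding a TrMap-expression that produces the matched triple and noting it cannot have been pruned, and then lifts equality by structural induction on $\symPattern$. Your Step~1 makes explicit the decomposition $\symRDFGraph = \bigcup_{\mappingExpr_\mathsf{TM} \in \fctTrmaps{\mappingExpr}} \symRDFGraph_\mathsf{TM}$ that the paper's base case uses only implicitly, and your Steps~2--3 simply restate that base case in contrapositive form as a locality lemma.
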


% \proof
% ...
% \qed
% %
% \medskip

Based on Proposition~\ref{prop:CorrectnessOfPruning}, we may prune TrMap-ex\-pres\-sions from a given RML-spe\-cif\-ic mapping expression (when used in the context of evaluating graph patterns over the RDF graph produced by applying the mapping expression). To do so, however, we need to be able to determine which triple patterns are satisfiable over which TrMap-ex\-pres\-sions, which leads us to the following decision problem.

\medskip \noindent
\begin{tabular}{|p{\textwidth}|}
	\hline
	\textbf{Problem:} \textsc{Satisfiability(TPoverTrMap)} \\[0.5mm]
	Input: a triple pattern $\symTP$ and a TrMap-ex\-pres\-sion $\mappingExpr$
	\\
	Question: Is $\symTP$ satisfiable over $\mappingExpr$?
	\\
	\hline
\end{tabular}
\medskip

The following result shows that we cannot answer this question in general.

\begin{proposition}
	\label{prop:SatisfiabilityForTPsUndecidable}
	\normalfont
	\textsc{Satisfiability(TPoverTrMap)} is undecidable.
\end{proposition}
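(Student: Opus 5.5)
The plan is a many-one reduction from an undecidable problem. The source of undecidability is the abstraction of source types. A source type $\sourceType = \sourceTypeTuple$ is an arbitrary tuple, and nothing restricts the evaluation functions $\eval$ and $\cbeval$ (or $\typeCast$) beyond being functions. So I would encode a Turing machine's behaviour into $\eval$. The reduction will be from the \emph{complement} of the halting problem, for instance the question of whether a Turing machine $T$ never halts on the empty input; equivalently, from the halting problem itself, since undecidability is preserved under complement. The input to the satisfiability problem is then a syntactic object (a triple pattern and a TrMap-expression), where the source type is given by some finite description. I will state explicitly that source types are assumed to be specified by computable descriptions (e.g.\ programs), as otherwise the decision problem is not even well-posed.

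\textbf{Construction.} Given a Turing machine $T$, I build a source type $\sourceType_T$ with the following components:
\begin{enumerate}
\item $\symSetDataset = \symSetContentA = \symSetContentB = \symAllStrings$, where strings are viewed as data objects;
\item $\symDataAcc = \symDataAcc' = \{\queryExpr_0\}$, a single query in each language;
\item $\typeCast(v) = (v, \ttl{xsd:string})$;
\item $\eval(D, \queryExpr_0)$ returns the one-element sequence $(D)$;
\item $\cbeval(D, d, \queryExpr_0)$ returns $(\texttt{"yes"})$ if $d$ encodes a halting computation of $T$ on the empty input, and $(\texttt{"no"})$ otherwise.
\end{enumerate}
The last component is computable, because checking whether a string encodes a valid halting run is decidable. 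The TrMap-expression $\mappingExpr_T$ is then
\[
\fctExtOpX{\objAttr}{\toLiteral(\attr_1,\ttl{xsd:string})}{\fctExtOpX{\predAttr}{\symIRI}{\fctExtOpX{\subjAttr}{\symIRI}{\fctExtractX{\sourceReference}{\sourceType_T}{\queryExpr_0}{\{\attr_1\mapsto\queryExpr_0\}}}}},
\]
and the triple pattern is $\symTP_T = (\symIRI, \symIRI, (\texttt{"yes"},\ttl{xsd:string}))$. This uses form (d) for the object, form (b) for the subject and predicate, and a single-attribute $\symAttrQueryMap$, all of which are allowed by Definition~\ref{def:TrMapExpression}.

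\textbf{Correctness.} I will unfold Definition~\ref{def:Semantics}. For a valid input $\validInput$ with $\validInput(\sourceReference) = D \in \symAllStrings$, the $\extract$ step yields exactly one tuple $\{\attr_1 \mapsto (\texttt{"yes"}, \ttl{xsd:string})\}$ or the corresponding tuple with \texttt{"no"}, depending on whether $D$ encodes a halting run. The \textsf{\small Extend} steps then produce a single triple $(\symIRI,\symIRI,\ell)$, assuming that $\toLiteral$ applied to a string literal with lexical form $x$ gives $(x,\ttl{xsd:string})$, as per the cited evaluation definition of~\cite{minoo2025AlgebraPublished}. It follows that $\symTP_T$ is satisfiable over $\mappingExpr_T$ iff some string encodes a halting run of $T$, i.e.\ iff $T$ halts on the empty input. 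Since the map $T \mapsto (\symTP_T, \mappingExpr_T)$ is computable, this reduces the halting problem to \textsc{Satisfiability(TPoverTrMap)}, which proves Proposition~\ref{prop:SatisfiabilityForTPsUndecidable}.

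\textbf{Main obstacle.} The hard part is not the reduction itself but justifying that source types with such artificial $\eval$/$\cbeval$ are legitimate instances, and fixing how an input is finitely represented. I would handle this by stipulating that source types are given as computable functions, and noting that the abstraction in~\cite{minoo2025AlgebraPublished} imposes no further restriction. As a backup, the encoding can instead be placed in $\typeCast$, or carried by a realistic query language (e.g.\ one whose evaluation is Turing-complete). A minor detail to check is that the error value $\error$ never arises: it does not, because the extend expressions only use an attribute that is always bound to a string literal.
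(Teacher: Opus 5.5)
Your reduction is correct under your stated assumption that source types are part of the input and given by computable descriptions, but it takes a different route from the paper.

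\textbf{The paper's route.} It reduces from \textsc{Satisfiability(RA)}, the satisfiability problem of relational algebra, which is undecidable by Trakhtenbrot's theorem via the equivalence of relational algebra and calculus. It fixes one source type $\sourceType_\textsf{RAx}$: its data objects are relational databases, its first-level query language is relational algebra with the standard evaluation, and its second-level language, $\cbeval$, and $\typeCast$ are trivial constants. The given RA expression $q$ becomes the query argument of the $\extract$ operator. Subject, predicate and object are all one constant IRI $\symIRI$, and the triple pattern is $(v,v,v)$. So the resulting graph is non-empty iff $q$ returns a tuple on the assigned database.

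\textbf{Your route.} You put the undecidability into $\cbeval$ itself, so your source type changes with each machine $T$.

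\textbf{What your version buys.} It is more elementary, needing only the halting problem rather than Trakhtenbrot's theorem. It is also explicit about the finite-representation question, which the paper leaves implicit.

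\textbf{What it costs.} It shows undecidability only when arbitrary computable source types count as part of the input, so the hardness is essentially built into the abstraction. The paper shows undecidability already for a single fixed, natural source type: relational databases queried with relational algebra, essentially SQL-based logical sources. There the varying part is a genuine component of the mapping, and no source type has to be encoded into the input at all. Your backup idea of fixing a source type whose $\symDataAcc'$-queries are machine codes would recover that fixed-source-type strength.

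\textbf{Two minor points.} As constructed, your map sends halting machines to satisfiable instances, so it reduces from the halting problem itself; the opening remark about the complement is unnecessary but harmless. Also, unlike the paper's constants-only construction, yours depends on the semantics of $\toLiteral$ from the cited work. That dependence is benign here, since it only has to return the string literal with lexical form \texttt{"yes"} unchanged.
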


% \proof

% \vspace{-2mm} % Layout Adjustment
\section{Incompatibility} \label{sec:Incompatibility}
% \vspace{-2mm} % Layout Adjustment

While Proposition~\ref{prop:SatisfiabilityForTPsUndecidable} shows that there is no general way to determine, for any given triple pattern~$\symTP$ and any given TrMap-ex\-pres\-sion~$\mappingExpr$, whether $\symTP$ is satisfiable over~$\mappingExpr$%
	\ or not%
, there is a class of cases
	%(i.e., specific pairs of triple patterns and TrMap-ex\-pres\-sions)
for which we can at least determine whether a triple pattern is guaranteed to be \emph{not} satisfiable over a TrMap-ex\-pres\-sion~(which is indeed what we need to know to use Proposition~\ref{prop:CorrectnessOfPruning} for pruning).

In this section, we introduce a syntactic property to identify such cases and prove its correctness. We call this property \emph{incompatibility}; more specifically, we say that a triple pattern~$\symTP$ is \emph{incompatible with} a TrMap-ex\-pres\-sion~$\mappingExpr$ if the pair of $\symTP$ and $\mappingExpr$ has the property that we aim to introduce.
To define the property formally, we need two auxiliary concepts. The first of them are regular expressions that we construct from torb-extend expressions as follows.

\begin{definition}
	\label{def:RegEx}
	\normalfont
	Let $\extExpr$ be a torb-extend expression (as per Definition~\ref{def:TORBExtendExpression}).
	The \textbf{matching pattern of $\extExpr$}, denoted by $\mathrm{regex}(\extExpr)$, is a
        string to be used as a regular expression\footnote{We assume the extended regular expression notation of the POSIX standard:\\ {\fontsize{8}{10} \url{https://pubs.opengroup.org/onlinepubs/9799919799/basedefs/V1_chap09.html}}} and
		%regular expression that
	is defined recursively as follows:
	\begin{enumerate}
% 		\item If $\extExpr$ is an IRI, then the string $\mathrm{regex}(\extExpr)$ is
% 		      a version of this IRI in which every character
% 		      that has a special meaning in regular expressions is escaped with a
% 		      backslash.
		\item If $\extExpr$ is an RDF literal $\literalTuple$, then
		      the string $\mathrm{regex}(\extExpr)$ is a version of
		      $\lex$ in which every character that has a special meaning in
		      regular expressions is escaped with a backslash.
		\item If $\extExpr$ is an attribute, then $\mathrm{regex}(\extExpr)$ is the
		      string ".+".
		\item If $\extExpr$ is of the form $\fctConcat{\extExpr_1, \ldots}{\extExpr_n}$,
		      then $\mathrm{regex}(\extExpr)$ is the string obtained by concatenating the strings $\mathrm{regex}(\extExpr_1)$, ..., and $\mathrm{regex}(\extExpr_n)$, in this order.
	\end{enumerate}
\end{definition}

% \begin{example}
% 	\label{ex:RegEx}
% 		%Let $\extExpr_\mathsf{ex}$ be the torb-extend expression $\fctConcat{\ell}{\attr_1}$ as used by one of the $\extend$ operators of the TrMap-ex\-pres\-sion of Example~\ref{ex:TrMapExpression}. The matching pattern of $\extExpr_\mathsf{ex}$ is
% 		%The matching pattern of the torb-extend expression $\fctConcat{\ell}{\attr_1}$ with $\ell = (\ttl{"http://example.com/route/"}, \ttl{xsd:string})$, as used by one of the $\extend$ operators of the TrMap-ex\-pres\-sion of Example~\ref{ex:TrMapExpression}, is
% 		The matching pattern of the torb-extend expression $\fctConcat{\ell}{\attr_1}$ that is used by one of the $\extend$ operators of the TrMap-ex\-pres\-sion of Example~\ref{ex:TrMapExpression} is the string
% 	"\ttl{http:\textbackslash/\textbackslash/example.com\textbackslash/route\textbackslash/.+}".
% \end{example}

The second auxiliary concept that we need is a notion of incompatibility between IRIs and extend expressions, which is defined as follows.
\begin{definition}
	\label{def:IncompatibleIRI}
	\normalfont
	An IRI~$\symIRI$ is \textbf{incompatible with} an extend expression~$\extExpr$~\cite{minoo2025AlgebraPublished} if \emph{any} of the following
three
	properties holds:
	\begin{enumerate}
		\item $\extExpr$ is any RDF term but not the IRI~$\symIRI$.

% 		\item $\extExpr$ is of the form $\fctToLiteral{\extExpr'\!, \dt}$. %in which
% 		      %$\extExpr'$ is an extend expression
% 		      %and $\dt$ is an IRI.
% 
% 		\item $\extExpr$ is of the form $\fctToBNode{\extExpr'}$. %in which
% 		      %$\extExpr'$ is an extend expression
% 		      %and $\LTB$ is a function $\LTB \!: \symAllStrings \rightarrow \symAllBNodes$.

		\item $\extExpr$ is either of the form $\fctToLiteral{\extExpr'\!, \dt}$ or of the form $\fctToBNode{\extExpr'}$.

		\item $\extExpr$ is of the form $\fctToIRI{\extExpr'\!}{\IbaseIRI}$
		      with $\extExpr'$
			      being
		      a torb-extend expression
		      and $\IbaseIRI$ an IRI
		      such that i)~$\symIRI$ does not match
		      	the regular expression
				$\mathrm{regex}(\extExpr')$ and ii)~$\symIRI$ does not match the regular expression formed by prefixing $\mathrm{regex}(\extExpr')$ with
					a version of $\IbaseIRI$ in which  every character that has a special meaning in regular expressions is escaped with a backslash.
					%$\mathrm{regex}(\IbaseIRI)$.
	\end{enumerate}
\end{definition}

\begin{example}
	\label{ex:IncompatibleIRI}
	Let $\extExpr_\mathsf{ex}$ be the extend expression $\fctToIRI{ \fctConcat{\ell}{\attr_1} }{\IbaseIRI}$ as used by the outermost $\extend$ operator of the TrMap-ex\-pres\-sion in Example~\ref{ex:TrMapExpression}. Since the matching pattern of its torb-extend expression, $\fctConcat{\ell}{\attr_1}$, is the string "\ttl{http:\textbackslash/\textbackslash/example.com\textbackslash/route\textbackslash/.+}", the IRI \ttl{http://transit.api/route/43} is incompatible with $\extExpr_\mathsf{ex}$,
		%whereas the IRI \ttl{http://example.com/route/43} is not incompatible with $\extExpr_\mathsf{ex}$.
		%whereas the IRI \ttl{http://example.com/route/43} is not incompatible.
		whereas \ttl{http://example.com/route/43} is not incompatible.
\end{example}

% \begin{lemma}
% 	\label{lemma:IncompatibleIRI}
% 	\normalfont
% 	Let $\extExpr$ be an extend expression and let $\symIRI$ be an IRI that is incompatible with $\extExpr$. For every mapping tuple~$\mappingTuple$, it holds that $\fctEval{\extExpr}{\mappingTuple} \neq \symIRI$.
% \end{lemma}
% 
% \proof
% The lemma follows readily from Definitions~\ref{def:RegEx} and~\ref{def:IncompatibleIRI}, in combination with Definition~\ref{def:ExtendExpressions:Semantics} and the definition of the extension functions being used~($\toIRI$,
% 	$\texttt{toBNode}$,
% 	%$\toBNode$,
% $\toLiteral$, and indirectly $\concat$; all defined in \cite[Appendix~B]{minoo2025AlgebraPublished}).
% \qed

Now we are ready to define
	%the incompatibility property for pairs of triple patterns and TrMap-ex\-pres\-sions.
	%incompatibility for pairs of triple patterns and TrMap-ex\-pres\-sions.
	our main incompatibility property.

\begin{definition}
	\label{def:IncompatibleTP}
	\normalfont
		%Let $\mappingExpr$ be a TrMap-ex\-pres\-sion~(with some $\IbaseIRI$ and $\LTB$). A triple pattern $(s,p,o)$ is \textbf{incompatible with $\mappingExpr$}
		A triple pattern $(s,p,o)$ is \textbf{incompatible with} a TrMap-ex\-pres\-sion~$\mappingExpr$~(with some $\IbaseIRI$ and $\LTB$)
	if either of the following conditions holds.
	\begin{enumerate}
		\item $\mappingExpr$ is of the first
			of the two forms
			%form
		%
			%given
		in Definition~\ref{def:TrMapExpression}, i.e., $\mappingExpr$ is of the form:
		\begin{equation*}
% 			\fctExtOpX{\graphAttr}{\extExpr_\mathrm{g}}{
				\fctExtOpX{\objAttr}{\extExpr_\mathrm{o}}{
					\fctExtOpX{\predAttr}{\extExpr_\mathrm{p}}{
						\fctExtOpX{\subjAttr}{\extExpr_\mathrm{s}}{
							\fctExtractDflt
						}
					}
				}
% 			}
		\end{equation*}
		and \emph{any} of the following conditions holds:
		\begin{enumerate}
			\item \label{case:IncompatibleTP:1:1}
			$s$ is an IRI that is incompatible with $\extExpr_\mathrm{s}$.

			\item \label{case:IncompatibleTP:1:2}
			$p$ is an IRI that is incompatible with $\extExpr_\mathrm{p}$.

			\item \label{case:IncompatibleTP:1:3}
			$o$ is an IRI that is incompatible with $\extExpr_\mathrm{o}$.

			\item \label{case:IncompatibleTP:1:4}
			$o$ is a literal and $\extExpr_\mathrm{o}$ is of the form $\fctToIRI{\extExpr'}{\IbaseIRI}$.

			\item \label{case:IncompatibleTP:1:5}
			$o$ is a literal and $\extExpr_\mathrm{o}$ is of the form $\fctToBNode{\extExpr'}$.

			\item \label{case:IncompatibleTP:1:6}
			$o$ is a literal $\literalTuple$ and $\extExpr_\mathrm{o}$ is of the form $\fctToLiteral{\extExpr'\!, \dt'}$ such that $\lex$ does not match the regular expression $\mathrm{regex}(\extExpr')$ or $\dt \neq \dt'\!$.
		\end{enumerate}

		\item $\mappingExpr$ is of the second
			of the two forms
			%form
		%
			%given
		in Definition~\ref{def:TrMapExpression}, i.e., $\mappingExpr$ is of the form:
		\begin{equation*}
% 			\fctExtOpX{\graphAttr}{\extExpr_1}{
				\fctExtOpX{\objAttr}{\extExpr_\mathrm{o}'}{
					\fctEquiJoinOp{
						\fctExtOpX{\predAttr}{\extExpr_\mathrm{p}}{
							\fctExtOpX{\subjAttr}{\extExpr_\mathrm{s}}{
								\fctExtractDflt
							}
						}
					}{
						\fctExtractX{\sourceReference'}{\sourceType'\!}{\queryExpr'\!}{\symAttrQueryMap'}
					}
				}
% 			}
		\end{equation*}
		and \emph{any} of the following conditions holds:
		\begin{enumerate}
			\item \label{case:IncompatibleTP:2:1}
			$s$ is an IRI that is incompatible with $\extExpr_\mathrm{s}$.

			\item \label{case:IncompatibleTP:2:2}
			$p$ is an IRI that is incompatible with $\extExpr_\mathrm{p}$.

			\item \label{case:IncompatibleTP:2:3}
			$o$ is an IRI that is incompatible with $\extExpr_\mathrm{o}'$.

			\item \label{case:IncompatibleTP:2:4}
			$o$ is a literal.
		\end{enumerate}
	\end{enumerate}
\end{definition}

\begin{example}
	\label{ex:IncompatibleTP}
	Both triple patterns of
		the SPARQL query in
	Listing~\ref{lst:sparql_input} are incompatible with the TrMap-ex\-pres\-sion in Example~\ref{ex:TrMapExpression}. For the first triple pattern, the incompatibility is due to condition~1(c) of Definition~\ref{def:IncompatibleTP} (see also Example~\ref{ex:IncompatibleIRI}), and for the second triple pattern, it is due to conditions~1(b) and~1(d).
\end{example}

The following result shows that incompatibility implies unsatisfiability, which makes it
	%the final
	another
main building block of our pruning approach because it guarantees that we can rely on
	%(syntactic) incompatibility checks rather than undecidable satisfiability checks
	incompatibility checks
to make
	correct
pruning decisions.

\begin{proposition}
	\label{prop:IncompatibilityImpliesUnsatisfiability}
	\normalfont
	Let $\mappingExpr$ be a TrMap-ex\-pres\-sion%
		. For every triple pattern $\symTP$ that is incompatible with $\mappingExpr$, it holds that
		%\ and $\symTP$ be a triple pattern. If $\symTP$ is incompatible with $\mappingExpr$, then
	$\symTP$ is \emph{not} satisfiable over~$\mappingExpr$.
\end{proposition}

% \proof
% ...
% \qed
% 
% \medskip

Based on Proposition~\ref{prop:IncompatibilityImpliesUnsatisfiability} we can replace the satisfiability-based condition in Proposition~\ref{prop:CorrectnessOfPruning} by an incompatibility-based condition. That is, combining both propositions gives us the following result%
	, which shows the correctness of the pruning algorithm that we shall develop from it in the next section%
.

\begin{corollary}
	\label{cor:CorrectnessOfPruningBasedOnIncompatibility}
	\normalfont
	Let $\symPattern$ be a graph pattern and let $\mappingExpr$ and $\mappingExpr'$ be RML-spe\-cif\-ic mapping expressions such that $\fctTrmaps{\mappingExpr'} \subseteq \fctTrmaps{\mappingExpr}$ and, for every TrMap-ex\-pres\-sion $\mappingExpr_\mathsf{TM} \in \bigl( \fctTrmaps{\mappingExpr} \setminus \fctTrmaps{\mappingExpr'} \bigr)$, it holds that every triple pattern in $\symPattern$ is incompatible with~$\mappingExpr_\mathsf{TM}$.
	Then, for every source assignment~$\validInput$ that is valid input for $\mappingExpr$%
		%(and, thus, also for $\mappingExpr'$)%
	, it holds that $\evalP{\symPattern}{\symRDFGraph} = \evalP{\symPattern}{\symRDFGraph'}$, where $\symRDFGraph$ and $\symRDFGraph'$ are the RDF graphs resulting from the mapping relations $\fctApply{\mappingExpr}{\validInput}$ and $\fctApply{\mappingExpr'}{\validInput}$, respectively.
\end{corollary}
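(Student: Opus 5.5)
The corollary follows by chaining Proposition~\ref{prop:IncompatibilityImpliesUnsatisfiability} into Proposition~\ref{prop:CorrectnessOfPruning}. Fix $\symPattern$, $\mappingExpr$, and $\mappingExpr'$ as in the statement. The goal is to show that $\mappingExpr$ and $\mappingExpr'$ satisfy the hypothesis of Proposition~\ref{prop:CorrectnessOfPruning}. That proposition then gives $\evalP{\symPattern}{\symRDFGraph} = \evalP{\symPattern}{\symRDFGraph'}$ for every source assignment~$\validInput$ that is valid input for $\mappingExpr$, which is exactly the required conclusion.

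The containment $\fctTrmaps{\mappingExpr'} \subseteq \fctTrmaps{\mappingExpr}$ is assumed directly, so only the second hypothesis needs checking. Let $\mappingExpr_\mathsf{TM} \in \fctTrmaps{\mappingExpr} \setminus \fctTrmaps{\mappingExpr'}$ and let $\symTP$ be any triple pattern in $\symPattern$. By assumption, $\symTP$ is incompatible with $\mappingExpr_\mathsf{TM}$, and $\mappingExpr_\mathsf{TM}$ is a TrMap-ex\-pres\-sion. Proposition~\ref{prop:IncompatibilityImpliesUnsatisfiability} therefore gives that $\symTP$ is not satisfiable over $\mappingExpr_\mathsf{TM}$. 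Since $\symTP$ was arbitrary, none of the triple patterns in $\symPattern$ is satisfiable over $\mappingExpr_\mathsf{TM}$, as required.

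The only point that needs care is that both propositions are stated for the same objects. Proposition~\ref{prop:IncompatibilityImpliesUnsatisfiability} is stated for TrMap-ex\-pres\-sions, and the elements of $\fctTrmaps{\cdot}$ are TrMap-ex\-pres\-sions by Definition~\ref{def:RMLSpecificExpression}. Satisfiability is taken over all valid inputs of the individual TrMap-ex\-pres\-sion, which is the same notion (Definition~\ref{def:Satisfiability}) that Proposition~\ref{prop:CorrectnessOfPruning} uses. No other step is needed, so there is no real obstacle; the substantive work lies in the two propositions being invoked.
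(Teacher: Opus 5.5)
Your proposal is correct and matches the paper's argument. Proposition~\ref{prop:IncompatibilityImpliesUnsatisfiability} converts the incompatibility hypothesis into the unsatisfiability hypothesis of Proposition~\ref{prop:CorrectnessOfPruning}, and that proposition then gives the conclusion directly.
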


% \input{chapters/05_propositions.tex}
% \vspace{-3mm} % Layout Adjustment
\section{Pruning Algorithm}%
\label{sec:pruning_algorithm}
% \vspace{-2mm} % Layout Adjustment

%%% I have commented the following part as this should be in the preliminaries section.

% This section introduces the algorithm to prune RML mappings (specifically version 1.1.2~\cite{Dimou2024:RMLSpec}) using queries.
% Before we start with the actual algorithm, we introduce a few concepts which we
% will use throughout the rest of the section to define our pruning strategy.
% Using the definition of satisfiability of a graph pattern $\symPattern$ over
% mapping expression $\mappingExpr$ (Section~\ref{sec:satisfiability}),
% it is evident that,
% in the case of SPARQL queries,
% $\symPattern$ is equivalent to a \emph{triple pattern}.
% A triple pattern is a tuple
% \begin{equation*}
% 	(s,p,o) \in \subjectPatternTuple \times \predicatePatternTuple \times \objectPatternTuple,
% \end{equation*}
% where $\symAllVariables$ is the countably infinite set of all variables,
% that is a subset of $\symAllStrings$, and is disjoint from $\symAllIRIs, \symAllBNodes,\text{ and } \symAllLiterals$.
% Following SPARQL's convention, we write variables by beginning it with a question mark (i.e., \texttt{?s} is a variable).
% With these preliminaries in place, we now turn to the algorithm of query-based
% pruning of RML mappings for which the pseudocode is provided as Algorithm~\ref{alg:prune_rml}.

To capture our pruning approach in an algorithmic form, we convert Corollary~\ref{cor:CorrectnessOfPruningBasedOnIncompatibility} into Algorithm~\ref{alg:prune_rml}.
The input to this algorithm is a
	SPARQL
graph pattern~$\symPattern$ and an RML-spe\-cif\-ic mapping expression~$\mappingExpr$, and the output is an RML-spe\-cif\-ic mapping expression that
	%is constructed such that it corresponds to
	is constructed to become the
	%corresponds to
$\mappingExpr'$ in Corollary~\ref{cor:CorrectnessOfPruningBasedOnIncompatibility}.

\begin{algorithm}[b!]
	    \SetArgSty{textnormal}
    \DontPrintSemicolon
	\caption{Prunes an
		%RML-spe\-cif\-ic mapping expression
		RML mapping
	for a given SPARQL graph pattern.}
	\label{alg:prune_rml}
        \KwData{%
            $\symPattern$ - a SPARQL graph pattern,
            $\mappingExpr$ - an RML-spe\-cif\-ic mapping expression 
        }
        \KwResult{an RML-spe\-cif\-ic mapping expression}
        $R \gets \emptyset$ \tcp*[f]{will be used to collect the TrMap-ex\-pres\-sions that cannot be pruned\label{line:prune_rml:initR}}\; 
        \ForEach{TrMap-ex\-pres\-sion $\mappingExpr_\mathsf{TM} \in \fctTrmaps{\mappingExpr}$\label{line:prune_rml:start_prune_loop}}{
            \ForEach{triple pattern $\symTP$ in $\symPattern$\label{line:prune_rml:tpattern_forloop}}{
                \If{$\symTP$ is \emph{not} incompatible with $\mappingExpr_\mathsf{TM}$\label{line:prune_rml:tpattern_satisfiable}}{
                        $R \gets R \cup \big\{ \mappingExpr_\mathsf{TM} \big\}$ \label{line:prune_rml:project}\;
                        \Continue \tcp*[f]{break out of loop at line~\ref{line:prune_rml:tpattern_forloop}}
                    }
            }
            \label{line:prune_rml:end_prune_loop}
        }
        $\mappingExpr'\! \gets \fctProjectOp{\{\subjAttr,\predAttr,\objAttr%
            %,\graphAttr%
        \}\!}{ \mappingExpr_\mathsf{TM} }$, where $\mappingExpr_\mathsf{TM}$ is an arbitrary TrMap-ex\-pres\-sion of~$R$\label{line:prune_rml:combine_start}\; 
        \ForEach{TrMap-ex\-pres\-sion $\mappingExpr_\mathsf{TM}' \in R \setminus \{ \mappingExpr_\mathsf{TM} \}$}{
            $\mappingExpr' \gets \fctUnionBig{\mappingExpr'\!}{ \fctProjectOp{\{\subjAttr,\predAttr,\objAttr%
                %,\graphAttr%
            \}\!}{ \mappingExpr_\mathsf{TM}' } } $ 

            \label{line:prune_rml:combine_end}
        }

        \Return{$\mappingExpr'$}\label{line:prune_rml:return}

\end{algorithm}

The algorithm first collects the TrMap-ex\-pres\-sions of $\mappingExpr$ that cannot be pruned~(lines~\ref{line:prune_rml:initR}--\ref{line:prune_rml:end_prune_loop}). To this end, the algorithm iterates over all TrMap-ex\-pres\-sions of $\mappingExpr$. For each such TrMap-ex\-pres\-sion~$\mappingExpr_\mathsf{TM}$, the algorithm checks whether there is a triple pattern in $\symPattern$ that is \emph{not} incompatible with~$\mappingExpr_\mathsf{TM}$~(lines~\ref{line:prune_rml:tpattern_forloop}--\ref{line:prune_rml:tpattern_satisfiable}). As soon as the first such triple pattern is found, $\mappingExpr_\mathsf{TM}$ is collected~(line~\ref{line:prune_rml:project}) and the algorithm moves on to the next TrMap-ex\-pres\-sion of $\mappingExpr$. Hence, every TrMap-ex\-pres\-sion that has not been collected into $R$
	at the end of the for-loop in lines~\ref{line:prune_rml:start_prune_loop}--\ref{line:prune_rml:end_prune_loop}
	%after the for-loop in lines~\ref{line:prune_rml:start_prune_loop}--\ref{line:prune_rml:end_prune_loop} finishes
is one that every triple pattern of $\symPattern$ is incompatible with (and, thus, can be
	ignored as per Corollary~\ref{cor:CorrectnessOfPruningBasedOnIncompatibility}).
	%pruned).
%
After collecting the TrMap-ex\-pres\-sions
	%that need
to be kept, the algorithm reconstructs an RML-spe\-cif\-ic mapping expression with them~(lines~\ref{line:prune_rml:combine_start}--\ref{line:prune_rml:combine_end}).

\section{Evaluation}
\label{sec:evaluation}
% \vspace{-3mm} % Layout Adjustment

\begin{figure}[b]
	\centering
	\includegraphics[width=0.8\textwidth]{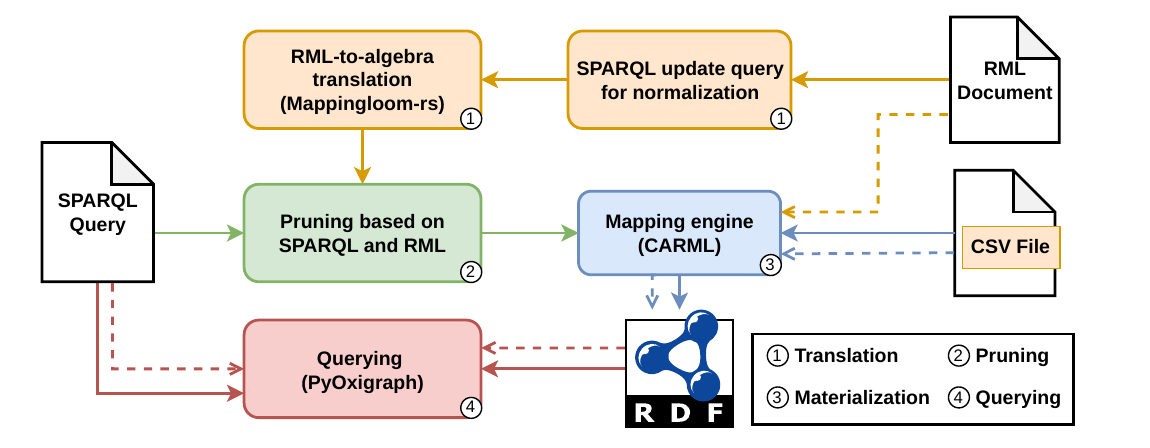}
% 	\vspace{-4mm} % Layout Adjustment
	\caption{The evaluation pipeline consists of four steps:
		i)~translation%
			%~(including an initial normalization of the RML mapping)%
		,
		ii)~pruning,
		iii)~materialization, and
		iv)~querying.
		The solid-lined arrows represent the execution flow that applies
		%our pruning approach,
		our query-spe\-cif\-ic pruning approach,
		whereas dashed arrows capture the baseline execution flow without pruning.}
	\label{fig:evaluation_pipeline}
% 	\vspace{3mm} % Layout Adjustment
\end{figure}

	%In this section, we
We now
evaluate how pruning of RML mappings
affects the execution time of each step of the evaluation pipeline shown in
Figure~\ref{fig:evaluation_pipeline}.
%In this work, we focus on measuring the isolated performance impact of our pruning approach.
Specifically, we conduct a self-evaluation where, given a SPARQL query and an RML mapping, 
we first prune the RML mapping for the query and then run the pruned RML mapping on a chosen mapping engine,
instead of comparing the performance across multiple mapping engines%
	%~\cite{ArenasGuerrero2022MorphKGCScalable,Iglesias2022ScalingKnowledgeGraph,MinOo2022RMLStreamer,Freund2024FlexRML}%
.
Self-evaluation enables us to isolate the performance impact of the pruning.
Moreover, by using an existing mapping engine, we show that such engines can
benefit from the performance improvements of our approach.

We
	%also emphasize that we
chose not to compare against virtualization systems for the following reason:
	%Such systems not only select relevant mapping rules but, additionally, they
	In addition to selecting relevant mapping rules, such systems
rewrite the given SPARQL query to the query language
	%supported by
	of
the underlying source database system, which in turn uses rewritten and optimized query plans to retrieve relevant data. Such a cascading application of optimizations makes it difficult to isolate the performance impact of the pruning strategy used by~these~systems.

	%In this section, we
	In the following subsections, we
	%We
first introduce the chosen benchmark and our rationale for choosing it~(Section~\ref{sub:benchmark_config}).
	%We then
	Thereafter, we
describe our evaluation setup~(Section~\ref{sub:eval_setup}), and present and analyze the results~(Section~\ref{sub:eval_results}).

\subsection{Benchmark}
\label{sub:benchmark_config}
For our evaluation, we use the GTFS-Madrid benchmark~\cite{Chaves2020GTFSMadridBench}.
It can generate synthetic source datasets based on real-world data about the Madrid subway network, scaled by a factor, and comes with 18 SPARQL queries
based on actual user queries.
The benchmark task is to answer these queries over the RDF representation of the generated dataset, referred to as the \emph{full RDF graph} in the rest of this section.
The benchmark includes an RML mapping document containing 86 TrMap-expressions to generate the full RDF graph.
The provided SPARQL queries are diverse, using SPARQL features such as \ttl{FILTER}, \ttl{OPTIONAL}, and \ttl{GROUP BY}, and contain 3--15 triple patterns each~\cite{Chaves2020GTFSMadridBench}.
While the benchmark is mainly used to evaluate virtualization engines~\cite{Chaves2020GTFSMadridBench},
it has also been adapted for evaluating materialization engines~\cite{ArenasGuerrero2022MorphKGCScalable}, making it relevant for our approach.
For each of the benchmark queries, we prune the benchmark RML mapping to generate an RDF graph that is a subset of the full RDF graph.
We configure the GTFS-Madrid benchmark to generate the synthetic dataset at scale factor~10.
This small scale factor fits our aim, which is to evaluate the performance impact of our pruning technique
rather than the scalability of an end-to-end data-mapping pipeline.
Furthermore, to limit the impact of a potential implementation error in the underlying mapping engine when parsing complex input data, 
we configure the benchmark to generate data in CSV format.

% \vspace{-3mm} % Layout Adjustment
\subsection{Setup}%
\label{sub:eval_setup}

Figure~\ref{fig:evaluation_pipeline} illustrates our evaluation pipeline.
For each step of the pipeline, we measure the time needed to complete
the step.
To ensure that these time measurements can be clearly isolated from one another, we have implemented each step of the pipeline via a separate component.
First, the translation step normalizes the RML document of the benchmark using SPARQL update queries and, then,
translates the normalized RML mappings into the mapping algebra, both as defined
	in our earlier work%
	%by Min Oo and Hartig%
~\cite{minoo2025AlgebraPublished} (where the normalization step is an adaptation of normalizations by Kontchakov et al.~\cite{Kontchakov2014SPARQLanswerOWL} and by Rodr\'iguez-Muro and Rezk~\cite{Rordriguez2015SPARQL2SQL}).
The time required for this step is measured as the \emph{translation time}. 
%anonymize done
The resulting (RML-spe\-cif\-ic) mapping expression is then pruned by using our approach of Section~\ref{sec:pruning_algorithm}, separately for
	each of the 18~queries
	%every query
of the benchmark.
	The time required for this process per query 
	%Execution time for the pruning
is measured as the \emph{pruning time}. 
For every query, we then create an RML mapping, containing each of the remaining TrMap-expressions as
	%an
	a separate
RML triples map,
and serialize it into a file
	%to be used
as the input for the next step.
	%In the following \emph{materialization step},
	%Next,
	In that step,
the pruned RML mappings are executed using the CARML\footnote{\url{https://github.com/carml/carml-jar/releases/tag/v1.4.0}} mapping engine, and
the resulting partial RDF graphs are serialized into files, one per query.
The time taken to materialize each such RDF graph is measured as the \emph{materialization time} (per query). 
	%Finally, the querying step consists of two sub-steps. First, the file with the
	Next, for every query, the file with the corresponding
RDF graph from the materialization step is loaded into an
Oxigraph\footnote{\url{https://github.com/oxigraph/oxigraph}} triple store.
	%We exclude the loading step from measurements. Second,
	Finally,
we conduct the actual \emph{querying time} evaluation by running the
SPARQL query under consideration on the loaded RDF graph, with a Python wrapper
for Oxigraph.\footnote{\url{https://pypi.org/project/pyoxigraph/}} 
%%% anonymization - replace after acceptance

To establish a baseline for our evaluation, we measure the time taken
with the original RML mapping file of the benchmark to
\begin{enumerate*}[label=\roman*)]
	\item execute it
		%via CARML
	to generate the full RDF graph without pruning (\emph{baseline materialization time}), and
	\item query the full RDF graph with each of the 18 SPARQL queries (\emph{baseline querying times}).  
\end{enumerate*}
For these
	%baseline
measurements, % of execution time on materialization and querying, 
the execution follows the dashed path in Figure~\ref{fig:evaluation_pipeline}.

The evaluation is conducted on a machine with an Intel i7 CPU~(4.80 GHz) and
16~GB of RAM, running Ubuntu 24.04.3.
The whole pipeline is executed five times for each of the 18 SPARQL queries,
with the individual steps executed sequentially as shown in Figure~\ref{fig:evaluation_pipeline}.
The first run is used to warm up the pipeline and its measurements are discarded;
the measurements of the remaining four runs are averaged per step per query.
Table~\ref{tab:result} presents these measurements.

% \vspace{-3mm} % Layout Adjustment
\subsection{Results}%
\label{sub:eval_results}

% \vspace{-1mm} % Layout Adjustment

\subsubsection{Pruning Time.}%
\label{ssub:pruning_time}

	We observe that the
	%The
time
	%required
	used
for pruning is negligible: between 3 and 5~ms, which is
	%several
orders of magnitude shorter than even the shortest materialization time
	with one of the pruned mappings
(1,717.52~ms%
	%), and also shorter than the translation time~(218.22~ms)%
	,~for~Q6)%
.

% \begin{table}[h!]  %% Don't use [h] for tables, figures, etc., because that way there is always the need to have some space both above *and* below the table/figure. If you use [t], there needs to be space only below the table/figure; if you use [b], there needs to be space only above the table/figure!!
\begin{table}[t]
\vspace{-5mm}
	\caption{For each query, the execution time for different steps of the evaluation pipeline is averaged over four runs.
	The last column reports the querying time when querying the full RDF graph.
    \texttt{TO} means the execution timed out (3600s). 
	}\label{tab:result}
	\centering
    \resizebox{\textwidth}{!}{
		\begin{tabular}{|c|r|r|r|r|r|r||r|}
			\hline
            \multicolumn{1}{|m{2cm}|}{\centering Query} & \multicolumn{1}{m{2cm}|}{\centering Pruning} & \multicolumn{1}{m{2.5cm}|}{\centering \# of TrMaps \\after pruning} &\multicolumn{1}{m{2.2cm}|}{\centering \# of Triples generated}&\multicolumn{1}{m{2.5cm}|}{\centering Materialization}&\multicolumn{1}{m{4cm}|}{\centering Pruning + Materialization as \% of Baseline Materialization } & \multicolumn{1}{m{1.8cm}||}{\centering Querying}& \multicolumn{1}{m{1.8cm}|}{\centering Querying \\without\\ pruning}  \\
			\hline
            Q1                 & 3.39 ms   & 7      & 2,952,240     & 81,474.83 ms &    99.46 \%  & 0.09 ms  & 0.11 ms  \\
			\hline
            Q2                 & 3.40 ms   & 9      & 1,235,330     & 15,349.36 ms &   18.74 \%  & 9.62 ms   & 11.71 ms  \\
			\hline
            Q3                 & 3.55 ms   & 10     & 1,247,950     & 15,092.22 ms &   18.42 \%  & 0.13 ms   & 0.12 ms   \\
			\hline
            Q4                 & 3.28 ms   & 14     & 39,730        & 2,822.38 ms  &   3.45 \%   & 4.15 ms   & 5.64 ms   \\
			\hline
            Q5                 & 4.01 ms   & 7      & 3,600         & 2,044.86 ms  &   2.50 \%   & 0.08 ms   & 0.15 ms   \\
			\hline
            Q6                 & 2.88 ms   & 2      &  260          & 1,717.52 ms  &   2.10 \%   & 0.21 ms   & 0.35 ms   \\
			\hline
            Q7                 & 4.63 ms   & 20     & 1,321,430     & 17,362.18 ms &    21.20 \%  & \texttt{TO }        & \texttt{TO } \\
			\hline
            Q8                 & 4.29 ms   & 18     & 112,970       & 6,150.15 ms  &   7.51 \%    & \texttt{TO }         & \texttt{TO }  \\
			\hline
            Q9                 & 4.12 ms   & 11     & 1,776,620     & 68,753.60 ms &    83.93 \%  & 0.41 ms  & 0.62 ms  \\
			\hline
            Q10                & 3.28 ms   & 5      & 80,770         & 3,889.52 ms &   4.75 \%   & 43.59 ms   & 78.56 ms \\
			\hline
            Q11                & 4.48 ms   & 14     & 6,370         & 2,175.37 ms  &   2.66 \%   & 0.12 ms   & 0.18 ms  \\
			\hline
            Q12                & 4.32 ms   & 13     & 121,500       & 4,606.08 ms  &   5.62 \%   & 72.26 ms   & 125.79 ms \\
			\hline
            Q13                & 3.24 ms   & 5      & 45,820         & 3,007.90 ms &   3.67 \%   & 8.37 ms   & 22.68 ms  \\
			\hline
            Q14                & 3.43 ms   & 10     & 129,660       & 4,824.48 ms  &   5.89 \%   & 99.14 ms   & 165.38 ms  \\
			\hline
            Q15                & 4.46 ms   & 86     & 4,546,610     & 102,905.99 ms &  125.63 \% & 0.38 ms    & 0.65 ms  \\
            \hline
            Q16                & 4.60 ms   & 11     & 8,800         & 2,231.85 ms  &    2.72 \%  & 0.09 ms    & 0.17 ms  \\
			\hline
            Q17                & 3.69 ms   & 11     & 62,130        & 4,390.00 ms  &     5.36 \%  & 0.12 ms   & 0.14 ms  \\
			\hline
            Q18                & 4.46 ms   & 12     & 7,060         & 2,120.87 ms  &     2.59 \%  & 0.11 ms   & 0.13 ms  \\
			\hline
			\multicolumn{8}{l}{} 
			\\
				\multicolumn{2}{p{3.9cm}}{Measurement of translation time, which is query independent (Step 1 in Fig.~\ref{fig:evaluation_pipeline})} &
				\multicolumn{4}{p{9cm}}{Baseline measurements for the number of TrMap-expressions in the original RML document, 
            the number of triples generated and the materialization time without pruning nor translation} &
				\multicolumn{2}{c}{}
         \\ \cline{1-1}\cline{3-5}
				\multicolumn{1}{|m{2cm}|}{\centering Translation} &
				\multicolumn{1}{c}{} &
            \multicolumn{1}{|m{2.5cm}|}{\centering \# of TrMaps \\before pruning} &
            \multicolumn{1}{m{2.2cm}|}{\centering \# of Triples generated} &
			\multicolumn{1}{m{2.5cm}|}{\centering Baseline Materialization \\ w/o pruning} &
			\multicolumn{3}{c}{}
			\\ \cline{1-1}\cline{3-5}
			\multicolumn{1}{|c|}{218.22 ms} &
			\multicolumn{1}{c}{} &
			\multicolumn{1}{|r|}{86} &
			\multicolumn{1}{r|}{4,546,610} &
			\multicolumn{1}{r|}{81,915.27 ms} &
			\multicolumn{3}{c}{}
			\\ \cline{1-1}\cline{3-5}
		\end{tabular}
	}
% \vspace{-3mm} % Layout Adjustment
\end{table}

\vspace{-3mm} % Layout Adjustment
\subsubsection{Materialization Time.}%
\label{ssub:materialization_time}
	%We observe that, for
	For
12 of the 18 queries, our approach reduces the
materialization time after pruning to at most 8\% (around 7~secs) of the baseline materialization time (82~secs)!
Such significant improvement is due to the small number of remaining TrMap-expressions
after pruning.
Specifically, the original 86 TrMap-expressions are reduced to at most 20, which significantly lowers the computation effort of the
materialization step~(ignoring Q15 for the moment).

	%However,
	We also observe that,
for Q1 and Q9, even though the pruning step retains fewer
TrMap-expressions than for Q7, their materialization times are higher than for Q7,
which we explain as follows. 
These two queries contain the triple pattern \ttl{"?shape gtfs:shapePoint
?shapePoint."} Triples that match this triple pattern are produced by a
TrMap-ex\-pres\-sion that is of the second form in
Definition~\ref{def:TrMapExpression} (i.e., using a join) and its two source
references are the same. 
Hence, evaluating this expression results in a self-join operation, which is
known to cause higher materialization
time~\cite{deVleeschauwer2024RML-view-to-CSV}.
To make matters worse, the source reference refers to the  
largest file of the input dataset, \ttl{SHAPES.csv}, which
contains about 585K records at the considered scale factor.
%Evaluating a self join on the largest file results in a materialization time
%that is significantly higher compared to the other cases.
Despite such limitations, our approach still achieves lower materialization
times for Q1 and Q9 than the baseline.

Materializing the RDF graph for Q15 took longer than the baseline materialization time (about 126\%).
Q15 contains the triple pattern \ttl{?stop ?p ?str}, which means that no TrMap-ex\-pres\-sions are pruned.
Due to the normalization step, each RML triples map that contains multiple
predicate-object maps is normalized into several separate RML triples maps,
each with one predicate and one object map~\cite{minoo2025AlgebraPublished}. 
As a result, the re-generated RML mapping, using the retained TrMap-expressions,
contains more RML triples map definitions than the original RML mapping. 
However, they are semantically the same (i.e., evaluating them independently
produces the same RDF graph).
A mapping engine that is not aware of such semantic equivalence of two
differently-written RML mappings, may take longer for one than for the other.
We can infer from our measurements that CARML is not aware of such semantic equivalence.

\vspace{-3mm} % Layout Adjustment
\subsubsection{Querying Time.}%
\label{ssub:querying_time}
Our measurements
	%in Table~\ref{tab:result}
show that querying the RDF graphs generated via the pruned RML mappings
is often faster than querying the full RDF graph.
The speed up is especially noticeable for Q10, Q12, and Q14,
where the querying time is smaller by around 44\%,
42\%, and 40\%, respectively.

For Q7 and Q8, with and without pruning, the query execution timed out at 3600~seconds.
Q7 and Q8 contain the most triple patterns, 15 and 14 respectively,
while also using the \ttl{OPTIONAL} feature of SPARQL and, for Q7, even \ttl{DISTINCT}.
Although Q10 also uses \ttl{DISTINCT}, its querying time is substantially smaller because
the corresponding generated RDF graph is much smaller (around 80K triples, compared 1.3M triples for Q7).
For Q8, the timeout occurred due to the combination of having the second most
triple patterns (14) and the largest chained star-shaped group, both of which
increases querying complexity.

% \vspace{-3mm} % Layout Adjustment
\section{Related Work}%
\label{sec:related_works}
% \vspace{-2mm} % Layout Adjustment
As mentioned in the introduction, there are two types of approaches to provide access to
RDF views of non-RDF data:
\emph{materialization} and \emph{virtualization}.

Materialization approaches rely on a mapping language (e.g., RML~\cite{dimou_ldow_2014} or R2RML~\cite{Seema2012R2RML}), define a direct mapping using a meta-model~\cite{asprino2023knowledge}, or use manually written scripts to transform non-RDF data to generate the RDF graph.
Thus, materialization systems~\cite{ArenasGuerrero2022MorphKGCScalable,asprino2023knowledge,Iglesias2022ScalingKnowledgeGraph,deVleeschauwer2024RML-view-to-CSV}
can process diverse types of data sources and formats into RDF graph, 
provided that the types of data sources and formats are supported by the
underlying mapping language, meta-model or implementation.
However, none of these systems considers a user's query while generating an RDF graph, which can be inefficient in dynamic environments where a query can be
answered by a small subset of the generated RDF graph.

Virtualization approaches provide SPARQL query functionality over a virtual RDF view of the
underlying non-RDF data.
To this end, these approaches rewrite any given SPARQL query into an equivalent query supported by the underlying data source.
The rewritten query is then executed against the data source, and the fetched results
are transformed back into SPARQL query results.
Such query rewriting and result transformation is usually guided by the usage of a mapping language such as R2RML~\cite{Seema2012R2RML}.
Thus, virtualization systems~\cite{calvanese2017ontop,priyatna2014formalization,ChavesFraga2021Enhancingvirtualontology}
are capable of answering SPARQL queries over non-RDF data through virtual RDF views.  
However, query rewriting also limits such systems to using data sources that
support a query language into which a SPARQL query can be translated.

While virtualization is query-aware, it is less flexible than materialization
in supported data sources, whereas materialization can handle diverse sources
but generates unnecessary triples.
Our approach bridges this gap by preemptively pruning
mappings, to retain mappings relevant to answering the user’s query.

% \vspace{-3mm} % Layout Adjustment
\section{Concluding Remarks and Future Work}%
\label{sec:concluding_remarks}
% \vspace{-2mm} % Layout Adjustment
The approach presented in this work opens up new possibilities for research on using non-RDF data sources when evaluating SPARQL queries over a federation of data sources. 
More concretely, there are two scenarios in which our pruning approach is beneficial. 
First, if a non-RDF data source is wrapped as a SPARQL endpoint to provide access to an RDF view of the underlying data, generating an up-to-date (and pruned) version of this view at query time is particularly relevant if the source data changes frequently.
Second, a query federation engine that can query non-RDF data sources through materialization at runtime~\cite{Hartig26:HeFQUINWebAPIsDemo} will benefit from the pruning of irrelevant mappings. 
As an example of the latter case, consider a JSON-based REST API as such a data source: if a part of the SPARQL query over the whole federation is
meant to be matched in the RDF views of the data from requests to that API, the
pruning approach will speed up the process of both materializing the data
retrieved from the API and evaluating the relevant sub-pattern of the SPARQL
query over this materialized RDF view. 
%
%{\color{red} \sout{Our pruning approach enables existing materialization engines to be used (as demonstrated with CARML)
%as SPARQL endpoints that materialize only the required RDF graph to answer the query on-the-fly. 
%As a result, this reduces the need for always-on, computationally intensive triple stores for query answering, 
%and thus, lowering operational costs and improving the flexibility of federated querying architecture. }}

While our pruning approach works at the triple pattern level, a natural next step is to 
extend it to whole basic graph patterns: Performing satisfiability checks that consider joins between multiple triple patterns 
may result in pruning even more TrMap-expressions and, thus, reduce the materialization time even further.

\enlargethispage{\baselineskip}%  % Layout Adjustment
\begin{credits}
	\subsubsection{\ackname}
	% A bold run-in heading in small font size at the end of the paper is
	% used for general acknowledgments, for example: This study was funded
	% by X (grant number Y).
	%
		%The work presented in this paper
		%The presented work
		This work
	was supported
	by the Knut and Alice Wallenberg Foundation (KAW 2023.0111),
	by
		%Vetenskapsrådet (the Swedish Research Council,
		the Swedish Research Council (%
		project
		%prj.\
	reg.\ no.\ 2025-06246),
	and
	by the imec.icon project PACSOI (HBC.2023.0752), which was co-financed by imec and VLAIO and brings together the following partners: FAQIR Foundation, FAQIR Institute, MoveUP, Byteflies, AContrario, and Ghent University~–~IDLab.
 
 \vspace{-3mm} % Layout Adjustment
 	\subsubsection{\discintname}
 	% It is now necessary to declare any competing interests or to specifically
 	% state that the authors have no competing interests. Please place the
 	% statement with a bold run-in heading in small font size beneath the
 	% (optional) acknowledgments\footnote{If EquinOCS, our proceedings submission
 	% system, is used, then the disclaimer can be provided directly in the system.},
 	% for example: The authors have no competing interests to declare that are
 	% relevant to the content of this article. Or: Author A has received research
 	% grants from Company W. Author B has received a speaker honorarium from
 	% Company X and owns stock in Company Y. Author C is a member of committee Z.
 	%
 	The authors have no competing interests to declare%
 	%\ that are relevant to the content of this article%
 	.
 \end{credits}

\paragraph*{Supplemental Material Statement:}
The source code and artifacts for the evaluation, including an implementation of the pruning algorithm, are provided in a GitHub
repository.\footnote{\url{https://github.com/s-minoo/satisfiability-experiment}}
Full proofs of Propositions~\ref{prop:CorrectnessOfPruning}%
  %, \ref{prop:SatisfiabilityForTPsUndecidable}, and
  --%
\ref{prop:IncompatibilityImpliesUnsatisfiability}
  are in \ExtendedVersion{the Appendix}\PaperVersion{\cite{ExtendedVersion}}.
  %are attached with the submission on EasyChair.
%%and, if accepted, will be published on arXiv in an extended version of the paper.

\paragraph*{Declaration of use of Generative AI:}
The authors have not employed any Generative AI tools for the presented work.

\bibliographystyle{splncs04}
\bibliography{bibliography}

\ExtendedVersion{
	% \newpage

% \begin{center}
% \bfseries \Large
% Supplementary Material for the Paper:
% \\
% "Query-Specific Pruning of RML Mappings"
% \end{center}

\appendix
\section*{\appendixname}
\subsection*{Proof of Proposition~\ref{prop:CorrectnessOfPruning}}
\addcontentsline{toc}{section}{Proof of Proposition~\ref{prop:CorrectnessOfPruning}}

Before we focus directly on proving Proposition~\ref{prop:CorrectnessOfPruning}, we first show the following two observations which shall become relevant in the proof of the proposition.

\begin{lemma} \label{lemma:SubGraphs2}
% \begin{note} \label{note:SubGraphs2}
	\normalfont
	Let $\mappingExpr$ and $\mappingExpr'$ be RML-spe\-cif\-ic mapping expressions such that
		%we have
	$\fctTrmaps{\mappingExpr'} \subseteq \fctTrmaps{\mappingExpr}$,
	let $\validInput$ be a source assignment that is valid input for $\mappingExpr$%
		%(and, thus, also for $\mappingExpr'$)%
	, and let $\symRDFGraph$ and $\symRDFGraph'$ be the RDF graphs resulting from the mapping relations $\fctApply{\mappingExpr}{\validInput}$ and $\fctApply{\mappingExpr'}{\validInput}$, respectively%
		%~(as per Definition~\ref{def:ResultingRDFGraph})%
	.
	Then,
		$\validInput$ is a valid input also for $\mappingExpr'\!$, and
		it holds that~%
	$\symRDFGraph'\! \subseteq \symRDFGraph$.
\end{lemma}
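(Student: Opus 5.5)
The plan is to show both claims by structural reasoning over the recursive shape of RML-spe\-cif\-ic mapping expressions (Definition~\ref{def:RMLSpecificExpression}).

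\emph{Validity.} By Definition~\ref{def:ValidInput}, $\validInput$ is a valid input for an RML-spe\-cif\-ic mapping expression exactly when it is a valid input for every TrMap-ex\-pres\-sion in its set $\fctTrmaps{\cdot}$. Since $\validInput$ is valid for $\mappingExpr$, it is valid for every element of $\fctTrmaps{\mappingExpr}$, and hence for every element of the subset $\fctTrmaps{\mappingExpr'}$. So $\validInput$ is a valid input for $\mappingExpr'$, and both $\fctApply{\mappingExpr}{\validInput}$ and $\fctApply{\mappingExpr'}{\validInput}$ are defined.

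\emph{Graph inclusion.} First I will prove an auxiliary claim by induction on Definition~\ref{def:RMLSpecificExpression}. For every RML-spe\-cif\-ic mapping expression $N$ and valid input $\validInput$, write $\fctApply{N}{\validInput}=(\symAttrSubSet_N,\symMappingInst_N)$. The claim is that $\symAttrSubSet_N=\{\subjAttr,\predAttr,\objAttr\}$ and
\[
\symMappingInst_N=\bigcup_{\mappingExpr_\mathsf{TM}\in\fctTrmaps{N}} \symMappingInst\bigl(\fctApply{\fctProjectOpDflt{\mappingExpr_\mathsf{TM}}}{\validInput}\bigr),
\]
where $\symMappingInst(\cdot)$ denotes the tuple set of a mapping relation.

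For the base case, $N=\fctProjectOpDflt{\mappingExpr_\mathsf{TM}}$. Every TrMap-ex\-pres\-sion has outermost $\extend$ operators that add $\subjAttr$, $\predAttr$ and $\objAttr$ (Definition~\ref{def:Semantics}, item~\ref{def:Semantics:Extend}). Therefore $\symAttrSubSet'\cap\symProjSet=\symProjSet$, and the claim holds trivially.

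For the step case, $N=\fctUnion{N_1}{N_2}$. Item~\ref{def:Semantics:Union} takes the union of the attribute sets and of the tuple sets. Since $\fctTrmaps{N}=\fctTrmaps{N_1}\cup\fctTrmaps{N_2}$, the claim follows from the induction hypothesis.

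Applying the claim to $\mappingExpr$ and $\mappingExpr'$ gives $\symMappingInst_{\mappingExpr'}\subseteq\symMappingInst_{\mappingExpr}$. The reason is that $\fctTrmaps{\mappingExpr'}\subseteq\fctTrmaps{\mappingExpr}$, and the evaluation of each $\fctProjectOpDflt{\mappingExpr_\mathsf{TM}}$ depends only on $\mappingExpr_\mathsf{TM}$ and $\validInput$, not on the surrounding expression. Both relations contain $\subjAttr,\predAttr,\objAttr$, so Definition~\ref{def:ResultingRDFGraph} applies to both. That definition is monotone in the tuple set: each tuple of $\symMappingInst_{\mappingExpr'}$ that yields an RDF triple also lies in $\symMappingInst_{\mappingExpr}$ and yields the same triple. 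Hence $\symRDFGraph'\subseteq\symRDFGraph$.

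The only point that needs care is that evaluation is compositional, i.e., a TrMap-ex\-pres\-sion occurring in two different expressions evaluates to the same relation. This is immediate from the recursive Definition~\ref{def:Semantics}. I would also note that $\fctTrmaps{\cdot}$ is a set, so repeated occurrences of the same TrMap-ex\-pres\-sion cause no issue, because union is idempotent.
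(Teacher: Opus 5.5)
Your proof is correct and follows the paper's route. You get validity of $\validInput$ for $\mappingExpr'$ directly from Definition~\ref{def:ValidInput} and $\fctTrmaps{\mappingExpr'} \subseteq \fctTrmaps{\mappingExpr}$, then tuple-set inclusion via the \textsf{Union} case of Definition~\ref{def:Semantics}, then monotonicity of Definition~\ref{def:ResultingRDFGraph}; your explicit induction (the tuple set of any RML-specific expression is the union over its TrMap-expressions) spells out the step the paper asserts in one line, and it correctly handles the case where $\mappingExpr'$ is not literally a sub-expression of $\mappingExpr$.
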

% \end{note}

\proof
The observation that $\validInput$ is a valid input also for $\mappingExpr'$ is a direct consequence of Definitions~\ref{def:ValidInput} and~\ref{def:RMLSpecificExpression}, in combination with the fact that $\fctTrmaps{\mappingExpr'} \subseteq \fctTrmaps{\mappingExpr}$.
To see that $\symRDFGraph'\! \subseteq \symRDFGraph$, let $\fctApply{\mappingExpr}{\validInput} = \mappingRel$ and $\fctApply{\mappingExpr'}{\validInput} = (\symAttrSubSet'\!, \symMappingInst')$. Then, by Definition~\ref{def:RMLSpecificExpression} and Definition~\ref{def:Semantics} (in particular, case~\ref{def:Semantics:Union}), it holds that $\symMappingInst'\! \subseteq \symMappingInst$. Considering the latter in the context of Definition~\ref{def:ResultingRDFGraph}, it follows that $\symRDFGraph'\! \subseteq \symRDFGraph$. \qed

\begin{lemma} \label{lemma:SubGraphs1}
% \begin{note} \label{note:SubGraphs1}
	\normalfont
	Let $\mappingExpr$ be an RML-spe\-cif\-ic mapping expression,
	let $\mappingExpr'\! \in \fctTrmaps{\mappingExpr}$ be a TrMap-ex\-pres\-sion in $\mappingExpr$,
	let $\validInput$ be a source assignment that is valid input for $\mappingExpr$%
		%(and, thus, also for $\mappingExpr'$)%
	, and let $\symRDFGraph$ and $\symRDFGraph'$ be the RDF graphs resulting from the mapping relations $\fctApply{\mappingExpr}{\validInput}$ and $\fctApply{\mappingExpr'}{\validInput}$, respectively%
		%~(as per Definition~\ref{def:ResultingRDFGraph})%
	.
	Then,
		$\validInput$ is a valid input also for $\mappingExpr'\!$, and
		it holds that
	$\symRDFGraph'\! \subseteq \symRDFGraph$.
\end{lemma}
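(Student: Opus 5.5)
Both claims follow by structural induction on the construction of $\mappingExpr$ as per Definition~\ref{def:RMLSpecificExpression}, mirroring the proof of Lemma~\ref{lemma:SubGraphs2}.

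\emph{Valid input.} Since $\mappingExpr' \in \fctTrmaps{\mappingExpr}$, Definition~\ref{def:ValidInput} says that $\validInput$ being a valid input for $\mappingExpr$ means exactly that $\validInput$ is a valid input for every TrMap-ex\-pres\-sion in $\fctTrmaps{\mappingExpr}$. In particular, it is valid for $\mappingExpr'$. So this part is immediate; the same definition also makes $\fctApply{\mappingExpr'}{\validInput}$ well defined.

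\emph{Inclusion of mapping instances.} Write $\fctApply{\mappingExpr}{\validInput} = \mappingRel$ and $\fctApply{\mappingExpr'}{\validInput} = (\symAttrSubSet', \symMappingInst')$. I will show that every tuple $\mappingTuple \in \symMappingInst'$ has a tuple in $\symMappingInst$ with the same values on $\subjAttr$, $\predAttr$ and $\objAttr$. This is stronger than, but enough for, what we need. The induction runs over the construction of $\mappingExpr$.
\begin{itemize}
\item \emph{Base case:} $\mappingExpr = \fctProjectOpDflt{\mappingExpr_\mathsf{TM}}$. Then $\fctTrmaps{\mappingExpr} = \{\mappingExpr_\mathsf{TM}\}$, so $\mappingExpr' = \mappingExpr_\mathsf{TM}$. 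By case~\ref{def:Semantics:Project} of Definition~\ref{def:Semantics}, $\symMappingInst = \{\mappingTuple[\symProjSet] \mid \mappingTuple \in \symMappingInst'\}$. Each $\mappingTuple \in \symMappingInst'$ has $\subjAttr$, $\predAttr$ and $\objAttr$ in its domain, because the three outer $\extend$ operators add them. Hence $\mappingTuple[\symProjSet]$ agrees with $\mappingTuple$ on these three attributes.
\item \emph{Inductive step:} $\mappingExpr = \fctUnion{\mappingExpr_1}{\mappingExpr_2}$. Then $\fctTrmaps{\mappingExpr} = \fctTrmaps{\mappingExpr_1} \cup \fctTrmaps{\mappingExpr_2}$, so $\mappingExpr'$ is a TrMap-ex\-pres\-sion of $\mappingExpr_1$ or of $\mappingExpr_2$. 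The source assignment $\validInput$ is valid for both sub-expressions. By case~\ref{def:Semantics:Union}, the instance of $\mappingExpr$ is the union of the instances of $\mappingExpr_1$ and $\mappingExpr_2$. The induction hypothesis applied to the relevant sub-expression therefore gives the claim.
\end{itemize}

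\emph{Conclusion.} By Definition~\ref{def:ResultingRDFGraph}, a triple in $\symRDFGraph'$ has the form $(\mappingTuple(\subjAttr), \mappingTuple(\predAttr), \mappingTuple(\objAttr))$ for some $\mappingTuple \in \symMappingInst'$. The corresponding tuple in $\symMappingInst$ yields the same RDF triple, so $\symRDFGraph' \subseteq \symRDFGraph$.

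The main care point is that $\mappingExpr'$ is a bare TrMap-ex\-pres\-sion and is not wrapped in $\projectOp{\symProjSet}$. Its tuples therefore carry extra attributes, and its instance is not literally a subset of $\symMappingInst$; this is why the argument compares graphs through the $\subjAttr$, $\predAttr$, $\objAttr$ values rather than instances directly. The only fact needed to close this gap is that projection keeps these three attributes unchanged.
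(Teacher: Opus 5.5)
Your proof is correct. The paper proves this lemma in one line, as the special case of Lemma~\ref{lemma:SubGraphs2} in which $\fctTrmaps{\mappingExpr'}$ is a singleton. That lemma in turn reads $\symMappingInst' \subseteq \symMappingInst$ off case~\ref{def:Semantics:Union} of Definition~\ref{def:Semantics}. Your direct induction over the $\union$ structure rests on the same fact, but compares tuples only on $\subjAttr$, $\predAttr$, $\objAttr$ instead of claiming inclusion of instances.

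This gives you rigor that the paper's reduction lacks. Here $\mappingExpr'$ is a bare TrMap-ex\-pres\-sion, not an RML-spe\-cif\-ic mapping expression, so Lemma~\ref{lemma:SubGraphs2} does not literally apply. Likewise, $\symMappingInst' \subseteq \symMappingInst$ is not literally true, because the tuples of $\mappingExpr'$ also carry the attributes in $\fctDom{\symAttrQueryMap}$. The paper therefore tacitly identifies $\mappingExpr'$ with $\fctProjectOpDflt{\mappingExpr'}$, and your remark that projection leaves the three triple attributes unchanged is exactly what justifies that identification.

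One wording point: in the join form of a TrMap-ex\-pres\-sion, the $\subjAttr$- and $\predAttr$-extends sit below the $\textsf{\small EqJoin}$, not among the ``outer'' operators. They still survive it, because join tuples are unions $\mappingTuple_1 \cup \mappingTuple_2$.
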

% \end{note}

\proof
Lemma~\ref{lemma:SubGraphs1} can be seen as a special case of Lemma~\ref{lemma:SubGraphs2} in which $\fctTrmaps{\mappingExpr'}$ (in Lemma~\ref{lemma:SubGraphs2}) is a singleton set.
\qed

\medskip

Now we prove Proposition~\ref{prop:CorrectnessOfPruning}:
Let $\validInput$ be a source assignment that is valid input for $\mappingExpr$, and thus also for $\mappingExpr'$%
	%~(which follows from Definitions~\ref{def:ValidInput} and~\ref{def:RMLSpecificExpression}, in combination with the fact that $\fctTrmaps{\mappingExpr'} \subseteq \fctTrmaps{\mappingExpr}$)%
	~(see Lemma~\ref{lemma:SubGraphs2})%
	%~(see Note~\ref{note:SubGraphs2})%
.
Furthermore, let $\symRDFGraph$ and $\symRDFGraph'$ be the RDF graphs resulting from the mapping relations $\fctApply{\mappingExpr}{\validInput}$ and $\fctApply{\mappingExpr'}{\validInput}$, respectively.
We have to show that $\evalP{\symPattern}{\symRDFGraph} = \evalP{\symPattern}{\symRDFGraph'}$, which we do by induction on the structure~of~$\symPattern$.

\emph{Base case:} Suppose $\symPattern$ is a triple pattern~$\symTP$.
	%In this case, to
	To
show that $\evalP{\symTP}{\symRDFGraph} = \evalP{\symTP}{\symRDFGraph'}$, we first show that $\evalP{\symTP}{\symRDFGraph} \subseteq \evalP{\symTP}{\symRDFGraph'}$%
	%\ and, thereafter, we show that $\evalP{\symTP}{\symRDFGraph} \supseteq \evalP{\symTP}{\symRDFGraph'}$. \par To show that $\evalP{\symTP}{\symRDFGraph} \subseteq \evalP{\symTP}{\symRDFGraph'\!}$,
	, for which
we let $\mu$ be a solution mapping in $\evalP{\symTP}{\symRDFGraph}$ and show that $\mu \in \evalP{\symTP}{\symRDFGraph'}$:
Since $\mu \in \evalP{\symTP}{\symRDFGraph}$, we know that $\symTP$ is satisfiable over $\mappingExpr$ and, thus, there exists a TrMap-ex\-pres\-sion~$\mappingExpr''\! \in \fctTrmaps{\mappingExpr}$ such that $\symTP$ is satisfiable over $\mappingExpr''$ and there exists a triple~$t$ in the RDF graph~$\symRDFGraph''$ resulting from $\fctApply{\mappingExpr''}{\validInput}$ such that $\mu[\symTP] = t$. Given that $\symTP$ is satisfiable over $\mappingExpr''\!$, $\mappingExpr''$ cannot be in $\fctTrmaps{\mappingExpr} \setminus \fctTrmaps{\mappingExpr'}$%
	~(because $\fctTrmaps{\mappingExpr} \setminus \fctTrmaps{\mappingExpr'}$ contains only TrMap-ex\-pres\-sions over which all triple patterns in $\symPattern$---which is $\symTP$ in this case---are \emph{not} satisfiable)%
. Therefore, $\mappingExpr''$ must be in $\fctTrmaps{\mappingExpr'}$.
	By Lemma~\ref{lemma:SubGraphs1},
	%As in Note~\ref{note:SubGraphs1},
this means that $\symRDFGraph''\! \subseteq \symRDFGraph'$ and, thus, $t \in \symRDFGraph'$. Consequently, $\mu \in \evalP{\symTP}{\symRDFGraph'\!}$.

To show that $\evalP{\symTP}{\symRDFGraph} \supseteq \evalP{\symTP}{\symRDFGraph'}$, we let $\mu$ be a solution mapping in~$\evalP{\symTP}{\symRDFGraph'}$ and show that $\mu \in \evalP{\symTP}{\symRDFGraph}$:
Since $\mu \in \evalP{\symTP}{\symRDFGraph'}$, we know that there exists a triple $t$ in $\symRDFGraph'$ such that $\mu[\symTP] = t$.
	By Lemma~\ref{lemma:SubGraphs2},
	%As in Note~\ref{note:SubGraphs2},
we have that $\symRDFGraph'\! \subseteq \symRDFGraph$. Therefore, the triple $t$ is also in $\symRDFGraph$ and, thus, $\mu \in \evalP{\symTP}{\symRDFGraph}$.

% \emph{Induction step:} We distinguish the following two cases (while there exist more cases, the proof for them is almost the same as for the cases considered here).
% \begin{itemize}
% 	\item $\symPattern$ is of the form $(\symPattern_1 \OpAND \symPattern_2)$.
% 	In this case, the induction hypothesis is that $\evalP{\symPattern_1}{\symRDFGraph} = \evalP{\symPattern_1}{\symRDFGraph'}$ and $\evalP{\symPattern_2}{\symRDFGraph} = \evalP{\symPattern_2}{\symRDFGraph'}$.
% 	Consequently,
% 		%$\evalP{(\symPattern_1 \OpAND \symPattern_2)}{\symRDFGraph} = \evalP{(\symPattern_1 \OpAND \symPattern_2)}{\symRDFGraph'}$.
% 		$\evalP{\symPattern}{\symRDFGraph} = \evalP{\symPattern}{\symRDFGraph'}$.
% 
% 	\item $\symPattern$ is of the form $(\symPattern_1 \OpOPT \symPattern_2)$.
% 	Also in this case, the induction hypothesis is that $\evalP{\symPattern_1}{\symRDFGraph} = \evalP{\symPattern_1}{\symRDFGraph'}$ and $\evalP{\symPattern_2}{\symRDFGraph} = \evalP{\symPattern_2}{\symRDFGraph'}$. Therefore, $\evalP{\symPattern}{\symRDFGraph} = \evalP{\symPattern}{\symRDFGraph'}$. \qed
% \end{itemize}

\emph{Induction step:} We consider the case that $\symPattern$ is of the form $(\symPattern_1 \OpAND \symPattern_2)$. In this case, the induction hypothesis is that $\evalP{\symPattern_1}{\symRDFGraph} = \evalP{\symPattern_1}{\symRDFGraph'}$ and $\evalP{\symPattern_2}{\symRDFGraph} = \evalP{\symPattern_2}{\symRDFGraph'}$. Consequently, $\evalP{(\symPattern_1 \OpAND \symPattern_2)}{\symRDFGraph} = \evalP{(\symPattern_1 \OpAND \symPattern_2)}{\symRDFGraph'}$. While there exist more cases~(other forms of graph patterns), the proof for them is essentially the same. \qed

\enlargethispage{\baselineskip}%  % Layout Adjustment

\subsection*{Proof of Proposition~\ref{prop:SatisfiabilityForTPsUndecidable}}
\addcontentsline{toc}{section}{Proof of Proposition~\ref{prop:SatisfiabilityForTPsUndecidable}}

We prove
	%the undecidability of \textsc{Satisfiability(TPoverTrMap)} by reduction. In particular, we reduce
	the undecidability of \textsc{Satisfiability(TPoverTrMap)} by reducing
	%Proposition~\ref{prop:SatisfiabilityForTPsUndecidable} by reducing
the satisfiability problem of the relational algebra---which is well know to be undecidable~\cite[Theorem~6.3.1, p.123, together with Theorem~5.3.10, p.80]{DBLP:books/aw/AbiteboulHV95}---to \textsc{Satisfiability(TPoverTrMap)}.

To this end, let
$\symDataObjUni_\textsf{RDB} \subset \symDataObjUni$ be the set of all relational databases (RDBs),
$\symDataObjUni_\textsf{RT} \subset \symDataObjUni$ be the set of all relational tuples, 
$\symDataAcc_\textsf{RA} \in \symDataAccUni$ be the set of all relational algebra expressions, and $\eval_\textsf{RA} \!: \symDataAcc_\textsf{RA} \times \symDataObjUni_\textsf{RDB} \rightarrow 2^{\symDataObjUni_\textsf{RT}}$ be the function that defines the evaluation semantics of the relational algebra. Then, the satisfiability problem of the relational algebra is the following decision problem.

\medskip \noindent
\begin{tabular}{|p{\textwidth}|}
	\hline
	\textbf{Problem:} \textsc{Satisfiability(RA)} \\[0.5mm]
	Input: a relational algebra expression $q \in \symDataAcc_\textsf{RA}$
	\\
	Question: Does there exist an RDB $r \in \symDataObjUni_\textsf{RDB}$ such that the set $\eval_\textsf{RA}(q,r)$ of
	\\ \phantom{Question:} result tuples is not empty?
	%\\
	%Question: Does there exist an $\mathit{rdb} \in \symDataObjUni_\textsf{RDB}$ such that $\eval_\textsf{RA}(q,\mathit{rdb})$ is not empty?
	\\
	\hline
\end{tabular}
\smallskip

For the reduction we need
	a function~$f$ that maps
	%to map
every input for \textsc{Satisfiability(RA)}, i.e., every
	relational algebra expression
$q \in \symDataAcc_\textsf{RA}$, to an input for \textsc{Satisfiability(TPoverTrMap)}, i.e., a triple pattern and a TrMap-ex\-pres\-sion.
To define $f$ we assume
an arbitrary data object $d^*\! \in \symDataObjUni$,
a query language $\symDataAcc_\textsf{x} \in \symDataAccUni$ with a single query $q^*\! \in \symDataAcc_\textsf{x}$,
and an
	%arbitrary
RDF literal $\ell^*\! \in \symAllLiterals$
($d^*\!$, $q^*\!$, and $\ell^*$ do not need to be specified further for the purpose of this proof); and we introduce a source type $\sourceType_\textsf{RAx} = (\symDataObjUni_\textsf{RDB}, \symDataObjUni_\textsf{RT}, \symDataObjUni_\textsf{x}, \symDataAcc_\textsf{RA}, \symDataAcc_\textsf{x}, \eval_\textsf{RA}, \eval_\textsf{x}, \typeCast_\textsf{x})$ where:
\begin{itemize}
	\item
	$\symDataObjUni_\textsf{RDB}$, $\symDataObjUni_\textsf{RT}$, $\symDataAcc_\textsf{RA}$, $\symDataAcc_\textsf{x}$, and $\eval_\textsf{RA}$ are defined as mentioned above;
%	\item
	$\symDataObjUni_\textsf{x} = \{ d^* \}$;
	\item
	$\eval_\textsf{x} \!: \symDataObjUni_\textsf{RDB} \times \symDataObjUni_\textsf{RT} \times \symDataAcc_\textsf{x} \rightarrow \symDataObjUni_\textsf{x}$ is defined such that, for every RDB $r \in \symDataObjUni_\textsf{RDB}$ and every tuple $t \in \symDataObjUni_\textsf{RT}$,
		it holds that
	$\eval_\textsf{x}(r,t,q^*) = \{ d^* \}$ (i.e., $\eval_\textsf{x}$ has the same result for every possible input and, thus, is a constant function);
	\item
	$\typeCast_\textsf{x} \!: \symDataObjUni_\textsf{x} \rightarrow \symAllLiterals$ is defined such that $\typeCast_\textsf{x}(d^*\!) = \ell^*\!$.
\end{itemize}
Now we define
	the function~%
$f$. For every $q \in \symDataAcc_\textsf{RA}$, $f$ maps $q$ to the pair~$(\symTP^*\!,\mappingExpr_q)$ with $\symTP^*$ being the triple pattern $(v,v,v)$ and $\mappingExpr_q$ being the TrMap-ex\-pres\-sion
$
% 	\fctExtOpX{\graphAttr}{\extExpr_\mathrm{g}}{
		\fctExtOpX{\objAttr}{\extExpr_\mathrm{o}}{
			\fctExtOpX{\predAttr}{\extExpr_\mathrm{p}}{
				\fctExtOpX{\subjAttr}{\extExpr_\mathrm{s}}{
					\fctExtractX{\sourceReference}{\sourceType_\textsf{RAx}}{\queryExpr}{\symAttrQueryMap}
				}
			}
		}
% 	}
$
such that
\begin{itemize}
	\item
	$v$ is an arbitrary variable (i.e., $v \in \symAllVariables$);
% 	\item
% 	$\extExpr_\mathrm{g}$ is the IRI \ttl{rr:defaultGraph};
	\item
	$\extExpr_\mathrm{o}$ is an arbitrary IRI $\symIRI \in \symAllIRIs$;
	\item
	$\extExpr_\mathrm{p}$ and $\extExpr_\mathrm{s}$ are also the IRI $\symIRI$, respectively;
	\item
	$\sourceReference$ is an arbitrary source reference (i.e., $\sourceReference \in \symSourcerefUni$);
	\item
	$q$ is the given relational algebra expression; and
	\item
	the partial function $\symAttrQueryMap \!: \symAttrUniverse \rightarrow \symDataAcc_\textsf{x}$ is defined such that $\fctDom{\symAttrQueryMap} = \{ \attr^* \}$ and $\symAttrQueryMap(\attr^*) = q^*\!$, where $\attr^*$ is an arbitrary attribute in $\symAttrUniverse \setminus \{ \subjAttr, \predAttr, \objAttr%
		%, \graphAttr
	\}$.
\end{itemize}
	Notice
	%We emphasize
that, for every $q \in \symDataAcc_\textsf{RA}$, the
	%resulting TrMap-ex\-pres\-sion~%
	TrMap-ex\-pres\-sion~%
	%resulting
$\mappingExpr_q$ has the following properties: For every source assignment~$\validInput$ that is valid input for $\mappingExpr_q$ it must hold that $\sourceReference \in \fctDom{\validInput}$ and $\validInput(\sourceReference) \in \symDataObjUni_\textsf{RDB}$.
%
%	Given such a source assignment~$\validInput$, it holds that the mapping relation $\fctApply{\mappingExpr_q}{\validInput}$ is not empty if and only if $\eval_\textsf{RA}\bigl(q, \validInput(\sourceReference) \bigr)$ is not empty.
%
Given such a source assignment~$\validInput$ and the assigned RDB $\mathit{rdb} = \validInput(\sourceReference)$, for every (relational) tuple $t \in \eval_\textsf{RA}(q, \mathit{rdb})$, the $\extract$ operator in $\mappingExpr_q$ creates mapping tuples for attribute $\attr^*\!$, with values obtained by $\typeCast_\textsf{x}(d)$ for every data object $d \in \eval_\textsf{x}(\mathit{rdb},t,q^*)$. Yet, since
	$\eval_\textsf{x}$ is a constant function with
$\eval_\textsf{x}(\mathit{rdb},t,q^*) = \{ d^* \}$, every $t \in \eval_\textsf{RA}(q, \mathit{rdb})$ is mapped to the same mapping tuple~$t'$ with $t'(\attr^*) = \typeCast_\textsf{x}(d^*) = \ell^*\!$. Therefore, if $\eval_\textsf{RA}(q, \mathit{rdb})$ is not empty, then the final mapping relation $\fctApply{\mappingExpr_q}{\validInput}$ contains a single mapping tuple: $t''\! = \{ \attr^*\! \rightarrow \ell^*\!, \subjAttr \rightarrow \symIRI, \predAttr \rightarrow \symIRI, \objAttr \rightarrow \symIRI \}$, and the RDF graph resulting from that mapping relation contains a single triple: $(\symIRI,\symIRI,\symIRI)$. The triple pattern $\symTP^*\! = (v,v,v)$ matches this triple. In contrast, if $\eval_\textsf{RA}(q, \mathit{rdb})$ is empty, then
	%the mapping relation
$\fctApply{\mappingExpr_q}{\validInput}$ is empty and, thus, the resulting RDF graph is empty%
	%; i.e., nothing to match for $\symTP^*\!$%
.

Based on these observations, we can conclude that, for every
	%relational algebra expression
$q \in \symDataAcc_\textsf{RA}$ and every $\mathit{rdb} \in \symDataObjUni_\textsf{RDB}$, it holds that $\eval_\textsf{RA}(q, \mathit{rdb}) \neq \emptyset$ if and only if $\evalP{\symTP^*}{\symRDFGraph} \neq \emptyset$, where $\symRDFGraph$ is the RDF graph resulting from the mapping relation $\fctApply{\mappingExpr_q}{\validInput_r}$ for which $\validInput_r$ is an arbitrary source assignment with
	 $\sourceReference \in \fctDom{\validInput_\mathit{rdb}}$ and
$\validInput_\mathit{rdb}(\sourceReference) = \mathit{rdb}$.
As a consequence, for every
	%relational algebra expression
$q \in \symDataAcc_\textsf{RA}$, it holds that there exists an $\mathit{rdb} \in \symDataObjUni_\textsf{RDB}$ such that $\eval_\textsf{RA}(q, \mathit{rdb}) \neq \emptyset$ if and only if $\symTP^*$ is satisfiable over $\mappingExpr_q$.

Therefore, if
	we assume that \textsc{Satisfiability(TPoverTrMap)} is
	%\textsc{Satisfiability(TPoverTrMap)} was
decidable,
	we could use the decider for it to also decide \textsc{Satisfiability(RA)}. Yet, since \textsc{Satisfiability(RA)} is undecidable~\cite{DBLP:books/aw/AbiteboulHV95}, we would have a contradiction and, thus, \textsc{Satisfiability(TPoverTrMap)} cannot be decidable.
	%then \textsc{Satisfiability(RA)} would be decidable, which is not the case~\cite{DBLP:books/aw/AbiteboulHV95}.
\qed

\subsection*{Proof of Proposition~\ref{prop:IncompatibilityImpliesUnsatisfiability}}
\addcontentsline{toc}{section}{Proof of Proposition~\ref{prop:IncompatibilityImpliesUnsatisfiability}}

Let $\symTP = (s,p,o)$ be a triple pattern that is incompatible with $\mappingExpr$. To show that $\symTP$ is not satisfiable over~$\mappingExpr$, we assume an arbitrary source assignment~$\validInput$ that is a valid input for $\mappingExpr$ and show, without loss of generality, that $\evalP{\symTP}{\symRDFGraph} = \emptyset$ where $\symRDFGraph$ is the RDF graph resulting from the mapping relation $\fctApply{\mappingExpr}{\validInput} = \mappingRel$.

First, we consider the case that $\mappingExpr$ is of the first of the two forms given in Definition~\ref{def:TrMapExpression}. In this case,
	%by Definition~\ref{def:IncompatibleTP},
we know that at least one of the conditions \ref{case:IncompatibleTP:1:1}--\ref{case:IncompatibleTP:1:6} in Definition~\ref{def:IncompatibleTP} holds for $\symTP$ and $\mappingExpr$. For each of these conditions, we now show that, given the condition, it holds that $\evalP{\symTP}{\symRDFGraph} = \emptyset$.
\begin{itemize}
	\item Suppose $s$ is an IRI that is incompatible with $\extExpr_\mathrm{s}$ (i.e., condition~\ref{case:IncompatibleTP:1:1}). In this case, for every mapping tuple~$\mappingTuple \in \symMappingInst$, it holds that $\fctEval{\extExpr_\mathrm{s}}{\mappingTuple} \neq s$%
		%, which follows readily from Definitions~\ref{def:RegEx} and~\ref{def:IncompatibleIRI}, in combination with Definition~\ref{def:ExtendExpressions:Semantics} and the definition of the extension functions being used~($\toIRI$, $\texttt{toBNode}$, $\toLiteral$, and indirectly $\concat$; all defined in \cite[Appendix~B]{minoo2025AlgebraPublished}). From $\fctEval{\extExpr_\mathrm{s}}{\mappingTuple} \neq s$ follows that $\mappingTuple(\subjAttr) \neq s$, which means that
		, which follows readily from Definitions~\ref{def:RegEx} and~\ref{def:IncompatibleIRI}, in combination with the definition of $\fctEval{\extExpr_\mathrm{s}}{\mappingTuple}$~\cite[Def.9]{minoo2025AlgebraPublished} and of the extension functions being used~($\toIRI$, $\texttt{toBNode}$, $\toLiteral$, and indirectly $\concat$; all defined in \cite[Appendix~B]{minoo2025AlgebraPublished}). From $\fctEval{\extExpr_\mathrm{s}}{\mappingTuple} \neq s$ follows that $\mappingTuple(\subjAttr) \neq s$, which means that
		%~(see Lemma~\ref{lemma:IncompatibleIRI}) and, thus, $\mappingTuple(\subjAttr) \neq s$. As a consequence,
	the RDF graph~$\symRDFGraph$ does not contain a triple $(s'\!, p'\!, o')$
		%such that
		s.t.\
	$s'\! = s$ and, thus, $\evalP{\symTP}{\symRDFGraph} = \emptyset$.

% 	\item Suppose $p$ is an IRI that is incompatible with $\extExpr_\mathrm{p}$ (i.e., condition~\ref{case:IncompatibleTP:1:2}). In this case, for every mapping tuple~$\mappingTuple \in \symMappingInst$, it holds that $\fctEval{\extExpr_\mathrm{p}}{\mappingTuple} \neq p$ (see Lemma~\ref{lemma:IncompatibleIRI}) and, thus, $\mappingTuple(\predAttr) \neq p$. As a consequence, the RDF graph~$\symRDFGraph$ does not contain a triple $(s'\!, p'\!, o')$ such that $p'\! = p$ and, thus, $\evalP{\symTP}{\symRDFGraph} = \emptyset$.
% 
% 	\item Suppose $o$ is an IRI that is incompatible with $\extExpr_\mathrm{o}$ (i.e., condition~\ref{case:IncompatibleTP:1:3}). In this case, for every mapping tuple~$\mappingTuple \in \symMappingInst$, it holds that $\fctEval{\extExpr_\mathrm{o}}{\mappingTuple} \neq o$ (see Lemma~\ref{lemma:IncompatibleIRI}) and, thus, $\mappingTuple(\objAttr) \neq o$. As a consequence, the RDF graph~$\symRDFGraph$ does not contain a triple $(s'\!, p'\!, o')$ such that $o'\! = o$ and, thus, $\evalP{\symTP}{\symRDFGraph} = \emptyset$.

\item Suppose $p$ is an IRI that is incompatible with $\extExpr_\mathrm{p}$ (i.e., condition~\ref{case:IncompatibleTP:1:2}) or $o$ is an IRI that is incompatible with $\extExpr_\mathrm{o}$ (i.e., condition~\ref{case:IncompatibleTP:1:3}). For
	%both of
these cases, $\evalP{\symTP}{\symRDFGraph} = \emptyset$ can be shown
	%using the same argument as used
	in the same way as done
for the previous~case.

	\item Suppose $o$ is a literal and $\extExpr_\mathrm{o}$ is
		%either
	of the form $\fctToIRI{\extExpr'}{\IbaseIRI}$ or of the form $\fctToBNode{\extExpr'}$ (%
		i.e.,
	conditions~\ref{case:IncompatibleTP:1:4} and~\ref{case:IncompatibleTP:1:5}). In this case, for every mapping tuple~$\mappingTuple \in \symMappingInst$, $\fctEval{\extExpr_\mathrm{o}}{\mappingTuple}$ is not a literal (by the definition of the $\toIRI$ function and the $\texttt{toBNode}$ function) and, thus, $\fctEval{\extExpr_\mathrm{o}}{\mappingTuple} \neq o$ and $\mappingTuple(\objAttr) \neq o$.
		%As a consequence, the RDF graph~$\symRDFGraph$ does not contain a triple $(s'\!, p'\!, o')$ such that
		Consequently, there is no triple $(s'\!, p'\!, o')$ in $\symRDFGraph$ s.t.\
	$o'\! = o$ and, thus, $\evalP{\symTP}{\symRDFGraph} = \emptyset$.

	\item Suppose $o$ is a literal $\literalTuple$ and $\extExpr_\mathrm{o}$ is of the form $\fctToLiteral{\extExpr'\!, \dt'}$ such that $\lex$ does not match the regular expression $\mathrm{regex}(\extExpr')$ or $\dt \neq \dt'\!$~(%
		i.e.,
	condition~\ref{case:IncompatibleTP:1:6}). In this case, for every mapping tuple~$\mappingTuple \in \symMappingInst$, $\fctEval{\extExpr_\mathrm{o}}{\mappingTuple}$ may be either the error symbol, $\error$, or a literal $(\lex''\!, \dt'')$ with $\dt''\! = \dt'\!$ (which follows from the definition of the $\toLiteral$ function~\cite{minoo2025AlgebraPublished}). In the latter case, if $\lex$ does not match the regular expression $\mathrm{regex}(\extExpr')$, it must hold that $\lex \neq \lex''$ and, thus, $\literalTuple \neq (\lex''\!, \dt'')$. Likewise, if $\dt \neq \dt'\!$, it also holds that $\literalTuple \neq (\lex''\!, \dt'')$. Hence, in all cases, we have that $\fctEval{\extExpr_\mathrm{o}}{\mappingTuple} \neq o$ and, thus, $\mappingTuple(\objAttr) \neq o$%
		%. As a consequence, the RDF graph~$\symRDFGraph$ does not contain a triple $(s'\!, p'\!, o')$ such that $o'\! = o$ and, thus, $\evalP{\symTP}{\symRDFGraph} = \emptyset$.
		, which leads to $\evalP{\symTP}{\symRDFGraph} = \emptyset$, as in the previous cases.
\end{itemize}

Now we consider the case that $\mappingExpr$ is of the second of the two forms given in Definition~\ref{def:TrMapExpression}. In this case,
	%by Definition~\ref{def:IncompatibleTP},
we have that at least one of the conditions~\ref{case:IncompatibleTP:2:1}--\ref{case:IncompatibleTP:2:4} in Definition~\ref{def:IncompatibleTP} holds for $\symTP$ and $\mappingExpr$. For conditions~\ref{case:IncompatibleTP:2:1}--\ref{case:IncompatibleTP:2:3}, we can show that $\evalP{\symTP}{\symRDFGraph} = \emptyset$ by using the same argument as
	%we have
used above for conditions~\ref{case:IncompatibleTP:1:1}--\ref{case:IncompatibleTP:1:3}.
It remains to discuss condition~\ref{case:IncompatibleTP:2:4}, for which we assume that $o$ is a literal and we notice that, for every mapping tuple~$\mappingTuple \in \symMappingInst$, $\fctEval{\extExpr_\mathrm{o}'}{\mappingTuple}$ is not a literal. The latter follows from the fact that, by Definition~\ref{def:TrMapExpression}, $\extExpr_\mathrm{o}'$ is either an IRI, a blank node, or of the form $\fctToIRI{\extExpr'}{\IbaseIRI}$ or $\fctToBNode{\extExpr'}$. Therefore, we have that $\fctEval{\extExpr_\mathrm{o}'}{\mappingTuple} \neq o$ and, thus, $\mappingTuple(\objAttr) \neq o$, which
	%leads to $\evalP{\symTP}{\symRDFGraph} = \emptyset$, as in the previous cases.
	again leads to $\evalP{\symTP}{\symRDFGraph} = \emptyset$.
\qed

}

\end{document}